
\documentclass[a4paper,10.5pt]{amsart}%
\usepackage{amsfonts}
\usepackage{amssymb}
\usepackage{amssymb}
\usepackage{amscd}
\usepackage[dvips]{graphicx}
\usepackage{amsmath}%
\setcounter{MaxMatrixCols}{30}
\providecommand{\U}[1]{\protect\rule{.1in}{.1in}}
\addtolength{\topmargin}{-10ex}
\addtolength{\oddsidemargin}{-5em}
\addtolength{\evensidemargin}{-5em}
\addtolength{\textheight}{12ex}
\addtolength{\textwidth}{4em}

\newtheorem{lemma}{Lemma}
\newtheorem{corollary}{Corollary}

\newtheorem{theorem}{Theorem}
\newtheorem{remark}{Remark}
\def\eqn {\begin{equation}}
\def\eeqn {\end{equation}}

\input epsf
\begin{document}
\title{Mathematical Theory of Exchange-Driven Growth}
\date{October 19, 2017}
\author{Emre Esenturk}
\address{University of Warwick, Mathematics Institute, UK}
\email{E.esenturk.1@warwick.ac.uk}
\thanks{Corresponding author email: E.esenturk.1@warwick.ac.uk}
\keywords{Exchange-driven growth, Aggregation}

\begin{abstract}
Exchange-driven growth is a process in which pairs of clusters interact and
exchange a single unit of mass. The rate of exchange is given by an
interaction kernel $K(j,k)$ which depends on the masses of the two interacting
clusters. In this paper we establish the fundamental mathematical properties
of the mean field kinetic equations of this process for the first time. We
find two different classes of behaviour depending on whether $K(j,k)$ is
symmetric or not. For the non-symmetric case, we prove global existence and
uniqueness of solutions for kernels satisfying $K(j,k)\leq Cjk$. This result
is optimal in the sense that we show for a large class of initial conditions
with kernels satisfying $K(j,k)\geq Cj^{\beta}$ ($\beta>1)$ the solutions
cannot exist. On the other hand, for symmetric kernels, we prove global
existence of solutions for $K(j,k)\leq C(j^{\mu}k^{\nu}+j^{\nu}k^{\mu})$
($\mu,\nu\leq2,$ $\mu+\nu\leq3),$ while existence is lost for $K(j,k)\geq
Cj^{\beta}$ ($\beta>2).$ In the intermediate regime $3<\mu+\nu\leq4,$ we can
only show local existence. We conjecture that the intermediate regime exhibits
finite-time gelation in accordance with the heuristic results obtained for
particular kernels.

\end{abstract}
\maketitle

\section{Introduction}

Growth processes are ubiquitous in nature. Surprisingly diverse phenomena at
contrasting scales (from microscopic level polymerization processes to cloud
formation to galaxy formation mechanisms at huge scales) have similar driving
mechanisms \cite{Drake}, \cite{Colm-rev}, \cite{Naim-book}. One of the
commonly occuring mechanisms is the cluster growth by coagulation for which
Smoluchoswki and Becker-Doring models are classical examples. For these
models, an extensive mathematical theory has been established \cite{Leyvraz},
\cite{Ball} relating the properties of the cluster size distribution to the
structure of the interaction kernel, $K(j,k)$, encoding the rate of
coagulation of clusters of sizes $j$ and $k$. Exchange-driven growth (EDG) is
another model for non-equilibrium cluster growth which is much less studied.
In EDG pairs of clusters interact by exchanging a single unit of mass
(monomer) \cite{Naim}. In the recent years EDG has also been considered as a
model of social phenomena like migration \cite{Ke2}, population dynamics
\cite{Leyvraz2} and wealth exchange \cite{Isp}. Approaches with similar
spirits found applications in other branches of social sciences \cite{Albi}.
However, no rigorous mathematical results on the EDG type mean-field rate
equations have been obtained to date. So, it is vitally important to do a
rigorous anlaysis of EDG type systems which is the goal of this article.

We note at the outset that, in this article, the time dependent description of
EDG is at the mesoscopic level and we only study the mean field rate equations
(EDG equations) ignoring fluctuations at the particle level. The purpose of
this paper is to provide the mathematical theory on the existence, uniqueness
and non-existence properties of solutions of the EDG equations. It is worth
mentioning that there has recently been increased mathematical interest in the
mass exchange systems since the corresponding kinetic equations (EDG
equations) can be obtained as scaling limits of a class of interacting
particle systems, including zero-range processes \cite{Godrec}, \cite{Stef3},
\cite{Godrec2}, \cite{Beltran}, and more general misanthrope processes
\cite{Stef4}, \cite{Waclaw}, \cite{Colm}, that have been intensively studied
for a range of condensation phenomena that they exhibit Also very recently, it
has been shown that EDG equations can be obtained as limits of a class of
interacting particle systems \cite{Stef3}.

The main mathematical object in our version of the kinetic formulation of the
EDG model is $c_{j}(t)$, the cluster size distirbution, describing the volume
fraction of the system which is occupied by clusters of size $j\geq1.$It is
intuitively clear that, for the classical EDG model which is based on particle
exchange between clusters of non-zero mass, the total mass of the physical
system is conserved. In this study, we consider a modified formulation where
$j=0$ corresponds to the empty (available) volume fraction not occupied by
clusters. As we show later, inclusion of empty volume introduces another
conserved quantity in addition to total mass, and the $c_{j}(t)$ sum to a
constant (or to 1 when normalized with rescaled time) for all times $t>0$.
This formulation is motivated by studies on coarsening dynamics in condensing
particle systems. We note that, the interpretation of the EDG problem
including empty volume or clusters of 'size' 0 is based on a different
motivation than the approach of physicists which does not include volume. The
two approaches are related and our results directly translate to this
classical interpretation, as we will discuss in detail in the conclusion.

Symbolically, the exchange process can be described in the following way. If
$<j>,<k>$ denote the non-zero clusters of sizes $j,k>0,$ then the rule of
interaction is
\[
<j>\oplus<k>\rightarrow<j\pm1>\oplus<k\mp1>.
\]
If, one of the clusters is a zero-cluster ($0$-cluster)$,$ then the rule is
given by
\[
<j>\oplus<0>\rightarrow<j-1>\oplus<1>.
\]

If all the clusters interact uniformly, $K(j,k)c_{j}c_{k}$ denotes the rate of
any cluster of size "$j"$ exporting a single particle to a cluster of size
"$k"$. The details of such microscopic processes are coded in the function
$K(j,k),$ known as the interaction kernel. Depending on the physical or social
system under study the form of the kernel changes. A known physical example of
cluster growth driven by exchange mechanism is the infinite range Kawasaki
\cite{Kawa} zerotemperature spin exchange systems. In this model spin domains
couple by pairwise interaction of perimeter spins, therefore kernel has the
form $K(j,k)=(jk)^{\lambda}$ with $\lambda=d-1,$ $d$ being the dimension (the
exchange rate is proportional to product of number of surface spins). In the
case of social behavior the form of the kernel can be obtained by the culture
or customs of the society \cite{Schel}. For instance, in a (unrealistic)
`non-greedy' society, trades (wealth exchange) would not depend on the
capital, hence the kernel can be assumed to be constant \cite{Isp}.

In most natural occuring systems the rate of these reactions are equal, and it
is common to take $K$ as a symmetric function of its arguments. However, there
are also many processes where export and import of particles do not take place
symmetrically and hence $K(j,k)\neq K(k,j)$ in general. Mathematically, these
generally non-symmetric coupled exchange reactions can be represented by an
infinite set of nonlinear ordinary differential equations (ODEs) with given
initial conditions as below%
\begin{equation}
\dot{c}_{0}=c_{1}\sum_{k=0}^{\infty}K(1,k)c_{k}-c_{0}\sum_{k=1}^{\infty
}K(k,0)c_{k}\text{,} \label{0-infode}%
\end{equation}

\begin{align}
\text{ }\dot{c}_{j}  &  =c_{j+1}\sum_{k=0}^{\infty}K(j+1,k)c_{k}-c_{j}%
\sum_{k=0}^{\infty}K(j,k)c_{k}\label{infode}\\
&  -c_{j}\sum_{k=1}^{\infty}K(k,j)c_{k}+c_{j-1}\sum_{k=1}^{\infty
}K(k,j-1)c_{k}\text{ ,\ }%
\end{align}%
\begin{equation}
c_{j}(0)=c_{j,0}\text{ \ \ \ }\{j=0,1,2,...\}. \label{infIC}%
\end{equation}

\bigskip In this article our main goal is to prove the fundamental properties
of this infinite system of equations such as existence of global solutions,
uniqueness, positivity and possible cases leading to non-existence. In order
to put our work into context, we give a brief summary of other growth systems
which have been extensively studied.

Basic aggregation models are quite old and date back to the works of
Smoluchowski \cite{Smo} (1917) and Becker-Doring \cite{Becker} (1935) (see
\cite{Leyvraz} for other related works). Over the decades, systematic
mathematical analysis of the resulting equations have been carried out
\cite{McLeod}, \cite{Ball} and mathematical questions concerning existence and
uniqueness of these systems have been investigated in fair generality for
kernels satisfying bounds, $K(j,k)\leq C$ \cite{Melzak}$,$ $K(j,k)\leq C(j+k)$
\cite{White}$,$ $K(j,k)\leq Ca(j)a(k)$ ($a(j)=o(j))$ \cite{Leyvraz3}.

One of the striking results of these studies was that when the interaction
kernel grows fast enough, drastic changes take place in the dynamics of the
problem. For instance, when the kernel is super-linear the solutions ceases to
exist \cite{Ball} for the Becker-Doring model, while in the Smoluchowski
model, the system undergoes a phase transition and begins behaving very
differently. The latter case, known as gelation \cite{Ziff}, \cite{Esco},
\cite{Menon} is a counter-intuitive phenomenon where some of the mass in the
system "escapes" to infinity. At the same time the uniqueness of the solution
is lost along with a change in scaling behavior. So, it is physically and
mathematically very important to identify the regions where such strange
behaviors may happen.

For the exchange-driven growth problem, heuristic studies suggest \cite{Naim}
that for symmetric kernels of the form $K(j,k)=(jk)^{\mu},$ no gelation occurs
if $\mu\leq3/2$ (regular case). When $2\geq\mu>3/2$ however, gelation takes
place at some finite time $T_{g}$. For, $\mu>2,$ even more strangely, gelation
takes place right at the beginning at $t=0,$ known as instantaneous gelation.
This behavior is significantly different from the Smoluchowski model in which
ordinary gelation occurs for $1\geq\mu>1/2$ and post gel solutions continue to
exist for $t>T_{g},$ while instantaneous gelation takes place for $\mu>1$
\cite{Dongen}, \cite{Carr}$.$

In this article, we investigate the both regular and singular cases for the
EDG problem in the sense described above. In particular, we prove rigorously,
for a system with general non-symmetric kernel satisfying the bound
$K(j,k)\leq Cjk$ that the solution exists globally and is unique and conserves
the mass. However, if the growth of the kernel is faster, i.e., $K(j,k)\geq
Cj^{\beta}$ ($\beta>1)$ then under some assumptions on the initial conditions,
the solutions can be shown to be non-existent. So, in this sense the growth
rate on the kernel for global existence is optimal. For symmetric kernels, the
results can be extended considerably. We prove that, if $K(j,k)\leq C(j^{\mu
}k^{\nu}+$ $j^{\nu}k^{\mu})$ ($\mu,\nu\leq2,$ $\mu+\nu\leq3)$ then the
solutions are global and mass-conserving. We also identify an intermediate
regime ($\mu,\nu\leq2,$ $\mu+\nu\leq4)$ where the solutions exist locally. We
conjecture that this is the gelation regime where there is a loss of mass
after a finite time (the gelation time). Beyond this regime, i.e., if
$K(j,k)\geq Cj^{\beta}$ ($\beta>2)$ once gain we show that the solutions cease
to exist.

To prove the existence we employ a truncation method (due to McLeod)
\cite{McLeod}, \cite{McLeod2} which suits well to the discrete structure of
the equations. The truncated finite ODE system is useful in providing basic
estimates on the total mass allowing one to pass to the limit which we will
prove to solve the original (infinite) ODE system. The main assumption is that
initial cluster distributions decay sufficiently fast (some higher moments
exist). For the symmetric kernels, we show that one can actually obtain better
estimates than just bounding the total mass (which is intuitively obvious).
The arguments follow by fortunate cancellations due to symmetry and use of
some fundamental inequalities. For the uniqueness of solutions we provide two
results for the non-symmetric and symmetric kernels. The ideas are based on
controlling the difference of (supposedly distinct) solutions. Again, one
needs to produce different (but similar) routes of steps for the two cases
(non-symmetric and symmetric kernels). The non-existence, on the other hand,
is based on the idea of obtaining lower bounds to the tails of the
distributions and arguing that these lead to contradictions. To prove the
non-existence for the non-symmetric kernel we need to make additional
assumption that the kernel selectively favors growth. For the symmetric
kernel, we do not need such selectivity (and it is clearly disallowed by the
symmetry). However, in that case, non-existence will take place only for fast
growing kernels (faster than quadratic) as expected.

The structure of this article is as follows. In Section 2, we detail the
truncation method and show some of its basic properties which hold true
uniformly for arbitrarily large finite systems. We then use these preliminary
results to prove, after a number of technical steps, global existence of
solutions for the non-symmetric and symmetric kernels. In Section 3, we show
the other important results related to the same EDG system: uniqueness,
positivity and non-existence of solutions. In Section 4, we conclude the paper
by discussing the relationship between our formulation of the problem and
existing physics literature. We also point out possible extensions of the
current work and suggest some other future research directions.

\section{Existence of Solutions}

We start by giving the setting of the problem and some definitions. Let
$X_{\mu}=\{x=(x_{j}),$ $x_{j}\in\mathbb{R};\left\Vert x\right\Vert _{\mu
}<\infty\}$ be the space of sequences equipped with the norm $\left\Vert
x\right\Vert _{\mu}=\sum_{j=1}^{\infty}j^{\mu}x_{j}$ where $\mu\geq0.$ Also,
let $K(\cdot,\cdot):\mathbb{R}\times\mathbb{R}\rightarrow\lbrack0,\infty)$ be
the cluster interaction kernel which we assume to be non-negative
throughout$.$ We set $K(0,j)\equiv0$ identically.

\textbf{Definition 1: }We say the system\ has a solutions iff

$(i)$ $c_{j}(t)$ $:[0,\infty)\rightarrow\lbrack0,\infty)$ is continuous and
$\sup_{t\in\lbrack0,\infty)}c_{j}(t)<\infty$

$(ii)$ $\int_{0}^{t}\sum_{k=0}^{\infty}K(j,k)c_{k}ds<\infty,$ $\int_{0}%
^{t}\sum_{k=1}^{\infty}K(k,j)c_{k}ds<\infty$ for all $j\in\mathbb{N}$ and
$t\in\lbrack0,T)$ ($T\leq\infty)$

$(iii)\ c_{j}(t)=c_{j}(0)+\int_{0}^{t}\left(  c_{j+1}\sum_{k=0}^{\infty
}K(j+1,k)c_{k}-c_{j}\sum_{k=0}^{\infty}K(j,k)c_{k}\right)  ds$

$\ \ \ \ \ \ \ \ \ \ \ \ \ \ \ \ \ \ \ \ \ \ \ \ \ \ \ \ +\int_{0}^{t}\left(
-c_{j}\sum_{k=1}^{\infty}K(k,j)c_{k}+c_{j-1}\sum_{k=1}^{\infty}K(k,j-1)c_{k}%
\right)  ds$ \ $\{j>0\}$

\ \ \ \ \ \ $c_{0}(t)=c_{0}(0)+\int_{0}^{t}c_{1}\sum_{k=0}^{\infty}%
K(1,k)c_{k}-c_{0}\sum_{k=1}^{\infty}K(k,0)c_{k}.$

\textbf{Definition 2:} For a sequence $(c_{j})_{j=1}^{N}$, we call the
quantity $M_{p}^{N}(t)=\sum_{j=0}^{N}j^{p}c_{j}(t)$ as the $p^{th}-$moment of
the sequence. If the sequence is infinite, then we denote the $p^{th}-$moment
with $M_{p}(t)=\sum_{j=0}^{\infty}j^{p}c_{j}(t).$

\textbf{Definition 3:} We say that the kernel $K(j,k)$ is nearly symmetric iff
$K(j,k)=K(k,j)$ for all $j,k\geq1.$

To prove the existence, we first consider a truncated system which respects,
even at the finite dimensional level, the key features of the original
infinite dimensional ODE system. Then, we obtain, for the truncated system,
some uniform bounds. With the help of these bounds the limit of the truncated
system is shown to be well defined and is actually a solution of the original problem.

Now, consider the truncated EDG system where we cut off the equations at a
finite order $N$ (that is, setting $c_{j}\equiv0$ identically for $j>N)$
\begin{equation}
\text{\ }\dot{c}_{0}^{N}=c_{1}^{N}\sum_{k=0}^{N-1}K(1,k)c_{k}^{N}-c_{0}%
^{N}\sum_{k=1}^{N}K(k,0)c_{k}^{N}, \label{Tode0}%
\end{equation}

\begin{align}
\text{\ }\dot{c}_{j}^{N}  &  =c_{j+1}^{N}\sum_{k=0}^{N-1}K(j+1,k)c_{k}%
^{N}-c_{j}^{N}\sum_{k=0}^{N-1}K(j,k)c_{k}^{N}\\
&  -c_{j}^{N}\sum_{k=1}^{N}K(k,j)c_{k}^{N}+c_{j-1}^{N}\sum_{k=1}%
^{N}K(k,j-1)c_{k}^{N},\text{ }\{1\leq j\leq N-1\}\text{\ \ }\nonumber
\end{align}%
\begin{equation}
\dot{c}_{N}^{N}=-c_{N}^{N}\sum_{k=0}^{N-1}K(N,k)c_{k}^{N}+c_{N-1}^{N}%
\sum_{k=1}^{N}K(k,N-1)c_{k}^{N}, \label{TodeN}%
\end{equation}
with the initial conditions given by%
\begin{equation}
c_{j}^{N}(0)=c_{j,0}\geq0,\text{ \ }\{0\leq j\leq N\}. \label{TodeIC}%
\end{equation}
The existence and uniqueness of this system comes from the standard ODE
theory. It is also known that the solutions are continuously differentiable.

Next, some preliminary lemmas are in order. The first lemma below demonstrates
(as a corollary) that the truncated system has two conserved quantities. The
significance of this result will shortly be clear when getting the uniform
estimates (in $N$) for the growth of cluster size distributions.

\begin{lemma}
\bigskip Let $g_{j}$ be a sequence of non-negative real numbers. Then,
\begin{equation}
\sum_{j=0}^{N}g_{j}\frac{dc_{j}^{N}}{dt}=\sum_{j=1}^{N}(g_{j-1}-g_{j}%
)c_{j}^{N}\sum_{k=0}^{N-1}K(j,k)c_{k}^{N}+\sum_{j=0}^{N-1}(-g_{j}%
+g_{j+1})c_{j}^{N}\sum_{k=1}^{N}K(k,j)c_{k}^{N}. \label{mom-red1}%
\end{equation}
If $K(\cdot,\cdot)$ is nearly symmetric, then one has
\begin{align}
\sum_{j=0}^{N}g_{j}\frac{dc_{j}^{N}}{dt}  &  =\sum_{j=1}^{N-1}(g_{j-1}%
-2g_{j}+g_{j-1})c_{j}^{N}\sum_{k=1}^{N-1}K(j,k)c_{k}^{N}\label{mom-red2}\\
&  +\sum_{j=1}^{N-1}\left(  (g_{j-1}-g_{j})+(g_{1}-g_{0})\right)
K(j,0)c_{j}^{N}c_{0}^{N}\nonumber\\
&  +\sum_{j=1}^{N-1}((g_{j+1}-g_{j})+(g_{N-1}-g_{N}))c_{j}^{N}K(N,j)c_{N}%
^{N}\nonumber\\
&  +((g_{N-1}-g_{N})+(g_{1}-g_{0}))c_{j}^{N}K(N,0)c_{0}^{N}.
\end{align}

\end{lemma}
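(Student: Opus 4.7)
The plan is to substitute the truncated ODEs (2.1)--(2.3) directly into $\sum_{j=0}^{N} g_j \dot c_j^{N}$ and reorganize by Abel-type reindexing. I would group the contributions by their structural role. The ``export'' terms, involving $K(j,k) c_j^N c_k^N$, appear as a loss in the equation for $c_j^N$ (weighted by $-g_j$) and as a gain in the equation for $c_{j-1}^N$ (weighted by $g_{j-1}$ after the shift). The ``import'' terms, involving $K(k,j) c_k^N c_j^N$, similarly contribute weights $-g_j$ and $g_{j+1}$. Carefully tracking the summation ranges (which in the truncated system start at $0$ or $1$ and end at $N-1$ or $N$) collapses everything into the two double sums in (2.5). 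The only non-routine point is to check that the boundary equations (2.1) for $j=0$ and (2.3) for $j=N$ fit seamlessly into the reindexed totals; this works because the ``missing'' terms at the endpoints correspond either to indices where $c^N$ is declared zero or to the convention $K(0,\cdot) \equiv 0$.

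For the nearly symmetric case, I would start from (2.5) and split each inner sum into an interior piece and a boundary contribution. The first sum, indexed by $k \in \{0,\dots,N-1\}$, is split at $k=0$; the second sum, indexed by $k \in \{1,\dots,N\}$, is split at $k=N$. I would also split the outer sums to isolate $j=N$ from the first total and $j=0$ from the second. On the purely interior region $j,k \in \{1,\dots,N-1\}$, the hypothesis $K(j,k)=K(k,j)$ permits combining the two double sums into one, yielding the second-difference weight $(g_{j-1}-2g_j+g_{j+1})$ and hence the first line of (2.6). The boundary pieces are then rearranged group by group: the $k=0$ slice of the first sum combines with the $j=0$ slice of the second to produce the $c_0^N$ line; the $k=N$ slice of the first sum combines with the $j=N$ slice of the second to produce the $c_N^N$ line; and the two remaining corner contributions at $(j,k)=(N,0)$ and $(j,k)=(0,N)$ give the final $c_0^N c_N^N$ term.

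The main obstacle is bookkeeping rather than any deep idea. The asymmetry of the truncated system at its two endpoints, together with the mismatched summation ranges inherited from the cutoff, forces one to be scrupulous about which boundary indices are included in which sum. One must also remember that the near-symmetry hypothesis only applies when both arguments of $K$ are at least $1$: the terms $K(j,0)$ and $K(N,j)$ cannot be swapped with their transposes, because $K(0,\cdot) \equiv 0$ by convention, and the transpose $K(j,N)$ is simply absent from the truncated system. This is precisely why the boundary lines of (2.6) retain the explicit factors $K(j,0)$, $K(N,j)$, and $K(N,0)$, rather than being absorbed into the symmetric interior sum. Once the splitting and relabeling are done carefully, the identity follows by direct comparison.
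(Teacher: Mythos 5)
Your proposal is correct and follows essentially the same route as the paper: substitute the truncated equations, shift the indices on the $c_{j\pm1}^N$ terms to obtain the two double sums of the first identity, then split off the $k=0$, $k=N$, $j=0$, $j=N$ boundary slices and use $K(j,k)=K(k,j)$ on the interior block $1\le j,k\le N-1$ to produce the second-difference weight and the three boundary lines. Your observations about the endpoint equations, the convention $K(0,\cdot)\equiv 0$, and the absence of $K(j,N)$ from the truncated sums are exactly the bookkeeping points the paper's proof relies on.
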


\begin{proof}
Writing $\dot{c}_{j}^{N}(t)$ from (\ref{Tode0})-(\ref{TodeN}) and taking the
summation for the $g(j)\dot{c}_{j}^{N}$ and shifting the indices on the terms
having $c_{j+1},c_{j-1}$, we get%
\begin{align}
\sum_{j=0}^{N}g_{j}\frac{dc_{j}^{N}}{dt}  &  =\sum_{j=1}^{N}g_{j-1}c_{j}%
^{N}\sum_{k=0}^{N-1}K(j,k)c_{k}^{N}-\sum_{j=1}^{N}g_{j}c_{j}^{N}\sum
_{k=0}^{N-1}K(j,k)c_{k}^{N}\label{g-form1}\\
&  -\sum_{j=0}^{N-1}g_{j}c_{j}^{N}\sum_{k=1}^{N}K(k,j)c_{k}^{N}+\sum
_{j=0}^{N-1}g_{j+1}c_{j}^{N}\sum_{k=1}^{N}K(k,j)c_{k}^{N}. \label{g-form2}%
\end{align}
Collecting the $1^{st}$,$2^{nd}$ and $3^{rd},4^{th}$ terms in (\ref{g-form1}%
),(\ref{g-form2}) together yields the first identity
\begin{align}
\sum_{j=0}^{N}g_{j}\frac{dc_{j}^{N}}{dt}  &  =\sum_{j=1}^{N}(g_{j-1}%
-g_{j})c_{j}^{N}\sum_{k=0}^{N-1}K(j,k)c_{k}^{N}\label{gj-sym1}\\
&  +\sum_{j=0}^{N-1}(g_{j+1}-g_{j})c_{j}^{N}\sum_{k=1}^{N}K(k,j)c_{k}^{N}.
\label{gj-sym2}%
\end{align}

For the second identity we first split the sums in (\ref{gj-sym1}),
(\ref{gj-sym2}) and recombine the terms that are alike, while accounting for
the "boundary terms"$.$ Let $A,$ $B~$denote the sums on the right hand side of
(\ref{gj-sym1}) and (\ref{gj-sym2}). Then, one has%
\begin{align*}
A  &  =\sum_{j=1}^{N-1}(g_{j-1}-g_{j})c_{j}^{N}\sum_{k=1}^{N-1}K(j,k)c_{k}%
^{N}+\sum_{j=1}^{N-1}(g_{j-1}-g_{j})c_{j}K(j,0)c_{0}^{N}\\
&  +(g_{N-1}-g_{N})c_{N}^{N}\sum_{k=1}^{N-1}K(N,k)c_{k}^{N}+(g_{N-1}%
-g_{N})c_{N}^{N}K(N,0)c_{0}^{N},
\end{align*}%
\begin{align*}
B  &  =(g_{1}-g_{0})c_{0}^{N}\sum_{k=1}^{N-1}K(k,0)c_{k}^{N}+(g_{1}%
-g_{0})c_{j}^{N}K(N,0)c_{0}^{N}\\
&  +\sum_{j=1}^{N-1}(g_{j+1}-g_{j})c_{j}^{N}\sum_{k=1}^{N-1}K(k,j)c_{k}%
^{N}+\sum_{j=1}^{N-1}(g_{j+1}-g_{j})c_{j}^{N}K(N,j)c_{N}^{N}.
\end{align*}
Taking the sum $A+B,$ rearranging the terms and using the symmetry of $K$
yields result.
\end{proof}

\begin{corollary}
For a general kernel $K,$ the zeroth moment and the first moment of the
truncated system (\ref{Tode0})-(\ref{TodeIC}) are conserved in time.
\end{corollary}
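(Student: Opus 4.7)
The plan is to apply the first identity \eqref{mom-red1} of Lemma 1 with two judicious choices of the weight sequence $g_j$, one for each moment.

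First I would take $g_j\equiv 1$. Then every difference $g_{j-1}-g_j$ and $g_{j+1}-g_j$ vanishes identically, so the right-hand side of \eqref{mom-red1} collapses to zero and we obtain $\frac{d}{dt}\sum_{j=0}^{N} c_j^N = 0$, which is conservation of the zeroth moment.

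Second, I would take $g_j = j$, so that $g_{j-1}-g_j = -1$ and $g_{j+1}-g_j = 1$. Identity \eqref{mom-red1} reduces to
\[
\frac{d}{dt}\sum_{j=0}^{N} j\, c_j^N \;=\; -\sum_{j=1}^{N} c_j^N \sum_{k=0}^{N-1} K(j,k)\, c_k^N \;+\; \sum_{j=0}^{N-1} c_j^N \sum_{k=1}^{N} K(k,j)\, c_k^N.
\]
The only remaining (trivial) step is to notice that these two double sums coincide: swapping the dummy indices $j\leftrightarrow k$ in the first sum turns it into $\sum_{k=1}^{N}\sum_{j=0}^{N-1} K(k,j)\, c_k^N c_j^N$, which matches the second sum term by term. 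Consequently $\frac{d}{dt}\sum_{j=0}^{N} j\, c_j^N = 0$, giving conservation of the first moment.

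I do not anticipate any real obstacle; the corollary is a direct specialization of Lemma 1 once the right weights are chosen, and it is noteworthy that the symmetry of $K$ is not invoked at all in either case. Conceptually, conservation of the zeroth moment expresses the preservation of the total number of clusters (including the empty-volume ``$0$-clusters''), whereas first-moment conservation encodes the preservation of total mass, since each microscopic exchange event merely relocates one unit of mass between two clusters without creating or destroying any.
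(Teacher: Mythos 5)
Your proof is correct and follows exactly the paper's route: specialize the first identity of Lemma 1 to $g_j\equiv 1$ and $g_j=j$, and observe that the two resulting double sums cancel after relabelling the dummy indices (a cancellation the paper asserts without spelling out). Nothing further is needed.
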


\begin{proof}
By setting $g_{j}=1,$ we see that all the terms in the first identity of Lemma
1 cancels each other%
\[
\sum_{j=0}^{N}\dot{c}_{j}^{N}(t)=0,
\]
and hence the zeroth moment is conserved. To see that the first moment is also
conserved we set $g_{j}=j.$ Then again, by the first identity of Lemma $1$ we
get%
\[
\sum_{j=0}^{N}\dot{c}_{j}^{N}(t)=\sum_{j=1}^{N}(-1)c_{j}^{N}\sum_{k=0}%
^{N-1}K(j,k)c_{k}^{N}+\sum_{j=0}^{N-1}(1)c_{j}^{N}\sum_{k=1}^{N}%
K(k,j)c_{k}^{N}=0,
\]
which gives conservation of the first moment.
\end{proof}

For the proofs of existence theorems, we will also need the following lemma
which shows the non-negativity of solutions of the truncated system if the
initial cluster distributions are non-negative.

\begin{lemma}
\bigskip Let $c_{j}^{N}(t)$ be a solution of the truncated system
(\ref{Tode0})-(\ref{TodeIC}) where $K(j,k)\geq0$. If $c_{j}^{N}(0)\geq0$ for
all $j\geq0,$ then $c_{j}^{N}(t)\geq0.$
\end{lemma}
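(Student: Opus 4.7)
The plan is to exploit the standard quasi-positivity structure of the truncated ODE system. Each equation in (\ref{Tode0})--(\ref{TodeN}) can be written in the form
\[
\dot{c}_j^N = G_j(c^N) - L_j(c^N)\, c_j^N,
\]
where $G_j(c^N)$ collects the gain terms, fed by $c_{j+1}^N$ and $c_{j-1}^N$ (with the obvious modifications at $j=0$ and $j=N$), and $L_j(c^N)$ is the non-negative local loss rate (the sum of $K(j,k)c_k^N$ and $K(k,j)c_k^N$ terms). The decisive observation is that $G_j$ does not carry a factor of $c_j^N$ itself, and since $K \geq 0$, the map $c \mapsto G_j(c)$ is non-negative whenever all components of $c$ are non-negative.

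With this in hand I would run a first-exit-time argument. Let
\[
\tau = \sup\bigl\{t \geq 0 : c_k^N(s) \geq 0 \text{ for all } k \text{ and all } s \in [0,t]\bigr\}.
\]
By the initial condition and continuity of the (globally defined, $C^1$) solution of the truncated system, $\tau > 0$. Suppose for contradiction $\tau < \infty$. Then by continuity there is some $j^\ast$ with $c_{j^\ast}^N(\tau) = 0$ while $c_k^N(\tau) \geq 0$ for every $k$. Treating $L_{j^\ast}(c^N(\cdot))$ and $G_{j^\ast}(c^N(\cdot))$ as known continuous functions of $t$ on $[0,\tau]$, the variation-of-constants formula gives
\[
c_{j^\ast}^N(t) = c_{j^\ast}^N(0)\, e^{-\int_0^t L_{j^\ast}(s)\, ds} + \int_0^t e^{-\int_s^t L_{j^\ast}(r)\, dr}\, G_{j^\ast}(c^N(s))\, ds,
\]
and on $[0,\tau]$ the gain $G_{j^\ast}$ is non-negative, so both summands are non-negative. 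This already forces $c_{j^\ast}^N(\tau) \geq c_{j^\ast}^N(0)\, e^{-\int_0^\tau L_{j^\ast}(s)\, ds}$ and, more importantly, enables a bootstrap past $\tau$.

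The one subtlety is extending non-negativity strictly past $\tau$ without circular reasoning, since the Duhamel representation above only certifies $G_{j^\ast} \geq 0$ on $[0,\tau]$. The clean fix is a perturbation trick: replace each equation by $\dot{c}_j^{N,\epsilon} = G_j(c^{N,\epsilon}) - L_j(c^{N,\epsilon})\, c_j^{N,\epsilon} + \epsilon$ with $\epsilon > 0$. At any first-touching time $\tau_\epsilon$ one would have $c_{j^\ast}^{N,\epsilon}(\tau_\epsilon) = 0$ and hence $\dot{c}_{j^\ast}^{N,\epsilon}(\tau_\epsilon) \geq \epsilon > 0$, a strict contradiction, so the perturbed solution remains non-negative; continuous dependence of finite-dimensional ODEs on parameters then yields $c_j^N = \lim_{\epsilon \to 0} c_j^{N,\epsilon} \geq 0$ for all $j$ and all $t$. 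Apart from this closing step, the proof uses only $K \geq 0$ and the standard ODE theory already invoked for the truncated system; I expect this perturbation argument to be the sole technical point requiring care.
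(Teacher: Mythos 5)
Your proof is correct and follows essentially the same route as the paper: both isolate the quasi-positive structure $\dot c_j^N = G_j(c^N) - L_j(c^N)c_j^N$ with $G_j \geq 0$ on the non-negative orthant and derive a contradiction at a first touching time (the paper writes this via $S(j,c^N)$ and $\bar S(j,c^N)$ and asserts that at the first bad time $c_i^N(t_0)=0$ with $(c_i^N)'(t_0)<0$, contradicting the non-negative right-hand side). The only substantive difference is that you explicitly repair the standard gap in that assertion --- the derivative at a first zero need not be strictly negative --- via the $\epsilon$-perturbation and passage to the limit, so your version is the more careful rendering of the same argument.
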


\begin{proof}
Let $S(j,c^{N})=\sum_{k=0}^{N-1}K(j,k)c_{k}^{N}$ and$~\bar{S}(j,c^{N}%
)=\sum_{k=1}^{N}K(k,j)c_{k}^{N}.$ Then the system (\ref{Tode0})-(\ref{TodeIC})
can be written as%
\[
\frac{dc_{0}^{N}}{dt}+\bar{S}(0,c^{N})c_{0}^{N}=S(1,c^{N})c_{1}^{N},
\]%
\begin{equation}
\frac{dc_{j}^{N}}{dt}+(S(j,c^{N})+\bar{S}(j,c^{N}))c_{j}=c_{j+1}%
^{N}(t)S(j+1,c^{N})+c_{j-1}^{N}(t)\bar{S}(j-1,c^{N})\text{ \ }\{N>j\geq1\}.
\label{fin-pos}%
\end{equation}%
\[
\frac{dc_{N}^{N}}{dt}+c_{N}^{N}S(N,c^{N})=c_{N-1}^{N}\bar{S}(N-1,c^{N})
\]

Now, if the assertion in the theorem were not true, then there would be a very
first time $t_{0}\in\lbrack0,\tau)$ and some $i\in\mathbb{N}$, such that
$c_{i}^{N}(t_{0})=0$ and $(c_{i}^{N})^{\prime}(t_{0})<0.$ Suppose $i>0$
(similar argument can be repeated if $i=0).$ Then for the left hand side of
(\ref{fin-pos}) we have
\begin{equation}
\frac{dc_{i}^{N}(t_{0})}{dt}+(S(i,c^{N}(t_{0}))+\bar{S}(i,c^{N}(t_{0}%
)))c_{i}^{N}(t_{0})<0. \label{fin-neg}%
\end{equation}
However, the right hand side of (\ref{fin-pos}) gives
\begin{equation}
c_{i+1}^{N}(t_{0})S(i+1,c^{N}(t_{0}))+c_{i-1}^{N}(t_{0})\bar{S}(i-1,c^{N}%
(t_{0}))\geq0
\end{equation}
since $c_{j}(t_{0})\geq0$. But this contradicts with (\ref{fin-neg}). Hence we
have $c_{j}(t)\geq0$ for all $j$ and $t.$
\end{proof}

Now, we state and prove the main theorems of this section. We provide two
different versions of the existence theorems for each of the non-symmetric and
nearly symmetric kernel cases. As the assumptions of the theorems are
different, the results do not imply each other. In the first version, we
demand more on the moments of the initial cluster distribution. This was the
approach taken in \cite{White} for the Smoluchowski equation. In the second
version we demand more on the growth of the kernel.

In the sequel, we denote, by $C\geq0,$ a dummy constant which may take
different values at different steps.

\begin{theorem}
Consider the EDG system given by (\ref{0-infode})-(\ref{infIC}). Let $K(j,k)$
be a general kernel satisfying $K(j,k)\leq Cjk$ for large enough $j,k.$ Assume
further that $M_{p}(0)=\sum_{k=0}^{\infty}j^{p}c_{j}(0)<\infty$ for some
$p>1$. Then the infinite system (\ref{0-infode})-(\ref{infIC}) has a global
solution $(c_{j})\in X_{1}$.
\end{theorem}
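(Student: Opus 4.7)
The plan is to execute McLeod's truncation strategy exactly as set up in Section~2. The truncated solutions $c_j^N(t)$ already exist globally by standard ODE theory, are non-negative by Lemma~2, and the zeroth and first moments $M_0^N$ and $M_1^N$ are conserved by Corollary~1. What remains is (a) to obtain a uniform-in-$N$ bound on the $p$th moment, (b) to extract a convergent limit, and (c) to show the limit solves the infinite system in the sense of Definition~1.

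First, I would apply the identity (\ref{mom-red1}) with $g_j = j^p$. The first sum is non-positive since $(j-1)^p - j^p \leq 0$, so it can be discarded, while for the second sum I use $(j+1)^p - j^p \leq C_p(j+1)^{p-1}$. Combined with $K(k,j) \leq Ckj$ (absorbing the finite set of small $(j,k)$ where the asymptotic bound fails into a constant), the inner sum obeys $\sum_{k=1}^N K(k,j) c_k^N \leq C_0 M_0^N + C j M_1^N$. Since $M_0^N$ and $M_1^N$ are conserved and $M_{p-1}^N \leq M_0^N + M_p^N$ when $p > 1$, I obtain
\begin{equation}
\frac{d}{dt} M_p^N(t) \leq C\bigl(1 + M_p^N(t)\bigr),
\end{equation}
and Gr\"onwall's inequality yields a bound $M_p^N(t) \leq (M_p(0)+1) e^{Ct}$ that is uniform in $N$ on every compact time interval.

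Next, for each fixed $j$, the bound $K(j,k) \leq Cjk$ and the conservation of $M_1^N$ imply $|\dot c_j^N(t)|$ is bounded uniformly in $N$ on $[0,T]$; together with $0 \leq c_j^N \leq M_0^N$ this gives equicontinuity, so Arzel\`a--Ascoli plus a diagonal extraction produces a subsequence (still denoted $c_j^N$) converging, for every $j$, uniformly on compact subsets of $[0,\infty)$ to some non-negative continuous $c_j(t)$. For the passage to the limit in the integral formulation of Definition~1, I would split the series $\sum_{k=0}^{\infty} K(j,k) c_k^N$ at some cutoff $K_0$: the head converges pointwise by the subsequence convergence, while the tail is controlled uniformly in $N$ by
\begin{equation}
\sum_{k \geq K_0} K(j,k) c_k^N \leq Cj \sum_{k \geq K_0} k\, c_k^N \leq Cj\, K_0^{1-p}\, M_p^N(t),
\end{equation}
which tends to $0$ as $K_0 \to \infty$. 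The same argument applies to $\sum_{k=1}^{\infty} K(k,j) c_k^N$. Dominated convergence in time then lets me pass to the limit in the integrated form of (\ref{Tode0})--(\ref{TodeN}), producing exactly the identity required in part $(iii)$ of Definition~1; properties $(i)$ and $(ii)$ follow from the uniform moment bound and the pointwise bound $c_j(t) \leq M_0(0)$.

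The main obstacle is step (a): engineering the moment identity so that the growth term is only linear in $M_p^N$. The first (destruction) sum in (\ref{mom-red1}) is fortunately dissipative and can simply be dropped, which is what saves the argument -- otherwise the product $M_1^N \cdot M_p^N$ would look like it produces a $j^{p+1}$ power after multiplying by the quadratic kernel, and bounding it by $M_p^N$ requires exactly the cancellation between the $j^{p}$ factor coming from $(j+1)^{p-1} \cdot j$ and the conservation of $M_1^N$. A secondary subtlety is that the bound $K(j,k) \leq Cjk$ is only asymptotic, so one must separately absorb the finitely many exceptional pairs into an additive constant; the tail estimate in the passage to the limit is where the hypothesis $p > 1$ (rather than just $p=1$) is used in an essential way.
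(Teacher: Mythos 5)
Your proposal is correct and follows essentially the same route as the paper: truncation, conservation of $M_0^N$ and $M_1^N$, a Gr\"onwall bound on $M_p^N$ obtained from the moment identity of Lemma~1 with $g_j=j^p$ (the paper bounds the destruction term from above rather than discarding it, but the effect is identical), Arzel\`a--Ascoli, and the tail estimate of order $K_0^{1-p}M_p^N$ to justify passing to the limit in the integral form. The only differences are cosmetic refinements (explicitly absorbing the finitely many exceptional $(j,k)$ pairs and the $M_{p-1}^N\leq M_0^N+M_p^N$ step), so no further comment is needed.
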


\begin{proof}
The key ingredient of the proof is the constancy of the zeroth and first
moment of the truncated system $M_{1}^{N}(t)$. This then will imply that
$c_{j}^{N}(t)$ and $\dot{c}_{j}^{N}(t)$ and are bounded uniformly. Indeed,
since $c_{j}^{N}(t)$ are non-negative, the bound on the zeroth moment
\[
\sum_{j=0}^{N}c_{j}^{N}(t)=\sum_{j=0}^{N}c_{j}^{N}(0)\leq\sum_{j=0}^{\infty
}c_{j}(0)=M_{0}(0)
\]
yields $c_{j}^{N}(t)\leq M_{0}(0)$ for all $N$ and $j\geq0.$ Similarly, for
the derivatives, we have (when $j\geq1)$
\begin{align*}
\left\vert \dot{c}_{j}^{N}(t)\right\vert  &  \leq\sum_{k=0}^{N-1}c_{j+1}%
^{N}K(j+1,k)c_{k}^{N}+c_{j}^{N}\sum_{k=0}^{N-1}K(j,k)c_{k}^{N}\\
&  +\sum_{k=1}^{N}K(k,j)c_{k}^{N}c_{j}^{N}+\sum_{k=1}^{N}K(k,j-1)c_{k}%
^{N}c_{j-1}^{N}\\
&  \leq C\sum_{k=0}^{N}jkc_{j}^{N}c_{k}^{N}\leq CM_{1}(0)^{2}.
\end{align*}
where, to get to the third line, we simply shifted the "$j"$ indices and used
the bound on $K(j,k).$ Similarly we can show $\left\vert \dot{c}_{0}%
^{N}(t)\right\vert \leq C$. Hence the sequence $(c_{j}^{N})$ is uniformly
bounded and equicontinuous. Then by Arzela-Ascoli theorem there is a
subsequence $\{c_{j}^{N(i)}\}$ which converges uniformly to a continuous
function, say $c_{j}(t).$ Let us denote the subsequence $N(i)$ also with $N$
for brevity. To show that $c_{j}(t)$ is a solution to the original problem we
need to show the series $\sum_{j=1}^{N}K(j,k)c_{k}^{N}$ converges uniformly on
bounded intervals of time $[0,T].$ To prove this, we need the boundedness of a
higher moment. Let $g(s)=s^{p}$ for some $1<p\leq2$ without loss of
generality$.$ By the mean value theorem $j^{p}-(j-1)^{p}=p(j-\theta_{1}%
)^{p-1}$ and $(j+1)^{p}-j^{p}=p(j+\theta_{2})^{p-1}$ for some $0<\theta
_{1},\theta_{2}<1.$ Then, from the first identity in Lemma 1%
\begin{align*}
\dot{M}_{p}^{N}(t)  &  =\sum_{j=0}^{N}j^{p}\dot{c}_{j}(t)=\sum_{j=1}%
^{N}p(j-\theta_{1})^{p-1}c_{j}^{N}\sum_{k=0}^{N-1}K(j,k)c_{k}^{N}+\sum
_{j=0}^{N-1}p(j+\theta_{2})^{p-1}c_{j}^{N}\sum_{k=1}^{N}K(k,j)c_{k}^{N}\\
&  \leq\sum_{j=1}^{N}pj^{p-1}jc_{j}^{N}\sum_{k=0}^{N-1}kc_{k}^{N}+\sum
_{j=1}^{N-1}pj(j+1)^{p-1}c_{j}^{N}\sum_{k=1}^{N}kc_{k}^{N}\leq CM_{p}%
^{N}(t)M_{1}(0).
\end{align*}
Hence one has $M_{p}^{N}(t)\leq M_{p}^{N}(0)e^{Ct}\leq M_{p}(0)e^{Ct}$ by
Gronwall inequality. Now, $\sum_{j=1}^{N-1}K(j,k)c_{k}^{N}$ converges
uniformly to $\sum_{j=1}^{\infty}K(j,k)c_{k}$. To see this we observe%
\begin{equation}
\left\vert \sum_{k=1}^{\infty}K(j,k)c_{k}^{N}-\sum_{k=1}^{\infty}%
K(j,k)c_{k}\right\vert \leq\sum_{k=1}^{N_{2}}K(j,k)\left\vert c_{k}^{N}%
-c_{k}\right\vert +\left\vert \sum_{k=N_{2}+1}^{\infty}K(j,k)(c_{k}+c_{k}%
^{N})\right\vert . \label{unidif1}%
\end{equation}
In the limit, the second term on the right hand side of (\ref{unidif1}) can be
made arbitrarily small for $N_{2}$ large enough since%
\[
\left\vert \sum_{k=N_{2}+1}^{\infty}K(j,k)(c_{k}+c_{k}^{N})\right\vert
\leq2Cj\sum_{k=N_{2}+1}^{\infty}kk^{-p}k^{p}(c_{k}+c_{k}^{N})\leq
CjN_{2}^{1-p}M_{p}^{N}(t).
\]
The first term on the right hand side of (\ref{unidif1}) can be made
arbitrarily small be letting $N$ become large. Hence $\sum_{k=1}^{\infty
}K(j,k)c_{k}^{N}$ converges uniformly. Similarly, $\sum_{k=1}^{N}%
K(k,j)c_{k}^{N}$ also converges uniformly. Now, if we write the truncated
system in the integral form%
\begin{align}
\text{\ }c_{j}^{N}(t)  &  =c_{j}^{N}(0)+\int_{0}^{t}c_{j+1}^{N}(s)\sum
_{k=0}^{N-1}K(j+1,k)c_{k}^{N}(s)-\int_{0}^{t}c_{j}^{N}\sum_{k=0}%
^{N-1}K(j,k)c_{k}^{N}(s)ds\\
&  -\int_{0}^{t}c_{j}^{N}(s)\sum_{k=1}^{N}K(k,j)c_{k}^{N}(s)+\int_{0}%
^{t}c_{j-1}^{N}(s)\sum_{k=1}^{N}K(k,j-1)c_{k}^{N}(s)ds\text{ \ \ }\nonumber
\end{align}
we see that we can pass to the limit $N\rightarrow\infty,$ on the right hand
side, under the integral sign since the functions $c_{j}^{N}(t)$\ and
$\sum_{k=1}^{N-1}K(j,k)c_{k}^{N}$ converge uniformly. This shows that $c_{j}$
as the limit, is a solution of the system (\ref{infode})-(\ref{infIC}).
\end{proof}

\bigskip From the construction in the above theorem, considering the integral
form of the equations, it is immediate that the limit solution $c_{j}(t)$ is
differentiable due to the uniform convergence of $c_{j}^{N}$ and the sums
involved. We also note that, under the conditions of Theorem 1, with the
boundedness of the higher moments, i.e., $M_{p}^{N}(t)<C(c(0),t)<\infty$ for
$p>1,$ the approximate (truncated) solutions converge strongly to the limit
function, i.e., $\lim_{i\rightarrow\infty}\left\Vert c_{j}^{N(i)}%
(t)-c_{j}(t)\right\Vert _{\mu}\rightarrow0$ for $\mu<p$. In particular, we
have the following corollary as a consequence.

\begin{corollary}
Let $c_{j}$ be the solution of the (\ref{infode})-(\ref{infIC}) under the
conditions of Theorem 1 for some $p>1$. Then $c_{j}(t)$ is continuously
differentiable. Moreover $M_{p}(t)<\infty$ and
\begin{align*}
\sum_{0}^{\infty}c_{j}(t)  &  =\sum_{0}^{\infty}c_{j}(0),\\
\sum_{0}^{\infty}jc_{j}(t)  &  =\sum_{0}^{\infty}jc_{j}(0).
\end{align*}

\end{corollary}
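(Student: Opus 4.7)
The plan is to leverage the uniform convergence $c_j^N(t)\to c_j(t)$ on compact time intervals and the uniform moment bound $M_p^N(t)\le M_p(0)e^{Ct}$, both already produced in the proof of Theorem 1, and to upgrade the pointwise limit to strong convergence in $X_\mu$ for every $\mu<p$.

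First I would handle the $C^1$ regularity. The limit $c_j$ satisfies the integral form of (\ref{infode}). The integrand on the right-hand side is built from $c_j(s)$ and the sums $\sum_{k=0}^\infty K(j,k)c_k(s)$ and $\sum_{k=1}^\infty K(k,j)c_k(s)$. Each $c_j(s)$ is continuous as a uniform limit, and the tail estimate of Theorem 1 gives $|\sum_{k>N_2}K(j,k)c_k^N(s)|\le CjN_2^{1-p}M_p(0)e^{CT}$ uniformly in both $N$ and $s\in[0,T]$, so the two series are uniform limits on $[0,T]$ of continuous partial sums, and in particular continuous in $s$. Hence the integrand is continuous in $s$, and the fundamental theorem of calculus yields $c_j\in C^1$ together with the pointwise ODE.

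For the moment bound I would invoke Fatou's lemma on counting measure: pointwise convergence $j^p c_j^N(t)\to j^p c_j(t)$ gives
\[
M_p(t)=\sum_{j=0}^\infty j^p c_j(t)\le\liminf_{N\to\infty}\sum_{j=0}^{N}j^p c_j^N(t)\le M_p(0)e^{Ct}<\infty.
\]

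For the two conservation identities, the plan is to prove $M_\mu^N(t)\to M_\mu(t)$ at every fixed $t$ for $\mu\in\{0,1\}$ and then to invoke Corollary 1. Splitting at a cutoff $N_1\le N$,
\[
\Bigl|\sum_{j=0}^N j^\mu c_j^N(t)-\sum_{j=0}^\infty j^\mu c_j(t)\Bigr|\le\sum_{j=0}^{N_1}j^\mu|c_j^N(t)-c_j(t)|+\sum_{j=N_1+1}^\infty j^\mu\bigl(c_j^N(t)+c_j(t)\bigr),
\]
and using $j^\mu\le N_1^{\mu-p}j^p$ for $j>N_1$, the tail is bounded by $2N_1^{\mu-p}M_p(0)e^{Ct}$, which is uniformly small in $N$ since $\mu<p$; the head vanishes as $N\to\infty$ by uniform convergence of each individual $c_j^N$. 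Passing to the limit in the identity $M_\mu^N(t)=M_\mu^N(0)=M_\mu(0)$ supplied by Corollary 1 yields both conservation statements. The only place where the strict hypothesis $p>1$ is essential rather than decorative is precisely this tail estimate for $\mu=1$; everything else is bookkeeping, and I expect no serious obstacles beyond being careful that the $N_2$-tail bound from the proof of Theorem 1 is indeed uniform in $s$ on $[0,T]$, which it is because $M_p^N(s)\le M_p(0)e^{CT}$ there.
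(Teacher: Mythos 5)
Your proposal is correct and follows essentially the same route as the paper, which derives the corollary from the remark preceding it: differentiability from uniform convergence of the sums in the integral form of the equations, and the conservation laws from the uniform bound $M_p^N(t)\leq M_p(0)e^{Ct}$ upgrading pointwise convergence to strong convergence in $X_\mu$ for $\mu<p$, combined with Corollary 1 for the truncated system. Your head/tail splitting with $j^{\mu}\leq N_1^{\mu-p}j^{p}$ and the Fatou step are exactly the details the paper leaves implicit.
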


If the kernel $K$ is nearly symmetric, by some further cancellations and use
of a simple inductive argument together with a fundamental inequality, we can
prove a stronger result for exponents satisfying $\mu+\nu\leq3$.

\begin{theorem}
Consider the infinite EDG system (\ref{infode})-(\ref{infIC}). Let $K(j,k)$ be
nearly symmetric and satisfy $K(j,k)\leq C(j^{\mu}k^{\nu}+j^{\nu}k^{\mu})$
$(\mu+\nu\leq3,$ $\mu,\nu\leq2)$ and $M_{p}(0)<\infty$ for some $p>2.$ Then
the system (\ref{infode})-(\ref{infIC}) has a global solution $(c_{j})\in
X_{2}$.
\end{theorem}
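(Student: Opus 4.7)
I follow the truncation and Arzel\`a--Ascoli scheme of Theorem 1, but extract a stronger uniform bound on the $p$-th moment $M_p^N(t)$ by using the \emph{symmetric} identity (\ref{mom-red2}) of Lemma 1 in place of the general one. The truncated system (\ref{Tode0})--(\ref{TodeIC}) has non-negative solutions (Lemma 2) with $M_0^N$ and $M_1^N$ conserved (Corollary 1). Once $M_p^N(t)$ is bounded uniformly in $N$ over any $[0,T]$, the rest is routine: $c_j^N\le M_0(0)$ and $|\dot c_j^N|\le Cj^2 M_0\, M_2^N$ (using $K(j,k)\le Cj^2 k^2$ for $j,k\ge 1$ and $M_2^N\le M_p^N$) give equicontinuity, a diagonal subsequence converges to a continuous $c_j$, and the tail estimate $\sum_{k>N_2}K(j,k)c_k^N\le C(j^\mu N_2^{\nu-p}+j^\nu N_2^{\mu-p})M_p^N$ (which uses $p>\max(\mu,\nu)$) lets the sums in the integral form pass to the limit.

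The central computation is to apply (\ref{mom-red2}) with $g_j=j^p$. A Taylor expansion gives $g_{j-1}-2g_j+g_{j+1}=p(p-1)(j+\theta)^{p-2}$ for some $\theta\in(-1,1)$, so the bulk double sum is dominated by $C\sum_{j,k\ge 1}j^{p-2}K(j,k)c_j^N c_k^N$. Of the boundary terms in (\ref{mom-red2}), those containing $K(j,0)$ or $K(N,0)$ vanish identically, because $K$ is nearly symmetric and $K(0,\cdot)\equiv 0$ by standing convention. The remaining boundary terms at the index $N$ carry the combination $(g_{j+1}-g_j)+(g_{N-1}-g_N)$; since $s\mapsto s^p$ is convex for $p>1$, its first differences are increasing, so this combination is $\le 0$ for $j\le N-1$ and the corresponding terms are non-positive and can be discarded in an upper bound. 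Inserting $K(j,k)\le C(j^\mu k^\nu+j^\nu k^\mu)$ and splitting the sum yields
\[
\dot M_p^N(t)\;\le\;C\bigl[M_{p+\mu-2}^N(t)\,M_\nu^N(t)+M_{p+\nu-2}^N(t)\,M_\mu^N(t)\bigr].
\]

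To close this inequality I use only $M_0^N$, $M_1^N$ and $M_p^N$ itself. For any $q\in[0,p]$ write $q=\alpha\cdot 0+\beta\cdot 1+\gamma\cdot p$ with $\alpha,\beta,\gamma\ge 0$ and $\alpha+\beta+\gamma=1$; three-exponent H\"older then gives
\[
M_q^N\;\le\;(M_0^N)^\alpha(M_1^N)^\beta(M_p^N)^\gamma,
\]
with $\gamma=0$ admissible when $q\le 1$ and $\gamma=(q-1)/(p-1)$ when $q\in[1,p]$. Applied to $M_{p+\mu-2}^N M_\nu^N$, the combined exponent of $M_p^N$ is at most $(p+\mu+\nu-4)/(p-1)$, which is $\le 1$ precisely because $\mu+\nu\le 3$; the bound $\mu,\nu\le 2$ keeps the individual exponents in $[0,1]$. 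The second product is handled identically. Since $M_0^N$ and $M_1^N$ are constant in time, this reduces to the linear Gronwall inequality $\dot M_p^N(t)\le C(1+M_p^N(t))$ with $C$ independent of $N$, and hence to the global uniform bound $M_p^N(t)\le(1+M_p(0))e^{Ct}$.

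The main obstacle is the interpolation step in the preceding paragraph: one has to track several sub-cases ($p-2+\mu\lessgtr 1$, $\nu\lessgtr 1$) to ensure the H\"older exponents stay non-negative, and it is only at the boundary $\mu+\nu=3$ that the exponent of $M_p^N$ saturates at $1$ --- the argument genuinely breaks when $\mu+\nu>3$, consistent with the paper's conjectured finite-time gelation in the intermediate regime. Once $M_p^N$ is uniformly controlled, passage to the limit mirrors Theorem 1, and the limit $c_j$ belongs to $X_2$ because $p>2$.
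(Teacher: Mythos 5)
Your overall scheme is the same as the paper's (truncation, non-negativity, conservation of $M_0^N$ and $M_1^N$, a uniform moment bound via the symmetric identity of Lemma~1, then Arzel\`a--Ascoli and passage to the limit in the integral form), but you close the moment inequality differently. The paper first bounds $M_2^N$ by converting $j^{\mu}k^{\nu}$ into $j^2k+jk^2$ via Young's inequality with the exponent pair $\bar\mu+\bar\nu=3$, obtains $\dot M_2^N\le CM_2^NM_1^N$, and only then bootstraps to $M_p^N$ using the already-established bound on $M_2$, so that two Gronwall steps are needed. You instead apply $g_j=j^p$ directly and close $\dot M_p^N\le C[M_{p+\mu-2}^NM_\nu^N+M_{p+\nu-2}^NM_\mu^N]$ by log-convexity of moments (three-exponent H\"older against $M_0$, $M_1$, $M_p$), observing that the total exponent of $M_p^N$ is $(p+\mu+\nu-4)/(p-1)\le 1$ exactly when $\mu+\nu\le 3$. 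This is a legitimate and arguably cleaner route: it needs only one Gronwall argument, it makes the criticality of the line $\mu+\nu=3$ transparent as the saturation of a H\"older exponent, and it works for all $p>2$ without the paper's reduction to $2<p<3$. The price is the bookkeeping of sub-cases to keep the interpolation weights non-negative, which you correctly flag.

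There is one incorrect step, though it is repairable. You discard the boundary terms of (\ref{mom-red2}) containing $K(j,0)$ and $K(N,0)$ on the grounds that they ``vanish identically, because $K$ is nearly symmetric and $K(0,\cdot)\equiv 0$.'' This is false: near symmetry is defined only for $j,k\ge 1$ precisely so that $K(j,0)$ can be strictly positive while $K(0,j)=0$ (hopping into empty volume is allowed; hopping out of a zero-cluster is not). These terms do not vanish in general. They can nevertheless be dropped in an upper bound, for the same reason as your $j=N$ boundary terms: with $g_j=j^p$, $p>1$, the coefficient is $(g_{j-1}-g_j)+(g_1-g_0)=1-(j^p-(j-1)^p)\le 0$ for $j\ge 1$ because the first differences of a convex function are increasing and equal $1$ at $j=1$; similarly $(g_{N-1}-g_N)+(g_1-g_0)\le 0$. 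This is exactly how the paper treats them. Replace your vanishing claim with this sign argument and the proof goes through.
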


\begin{proof}
The general idea of the proof is similar to the previous one. However, we now
allow faster growth on $K$ and therefore, boundedness of $M_{1}(t)$ is not
sufficient. We need estimates on the higher moments which will be done by
bounding uniformly the higher moments of the truncated system. To see this, we
use the second identity in the Lemma 1.

Let us first show that the second moment of the truncated system is uniformly
bounded$.$ We first observe that, in Lemma 1, the second line of
(\ref{mom-red2}) is non-positive. Indeed, choosing $g_{j}=j^{2}$ we have, for
$1\leq j\leq N-1,$%
\[
(g_{j-1}-g_{j})+(g_{1}-g_{0})=-2j+2\leq0
\]
Similarly, the second and third lines are also non-positive since $j\leq N-1,$
giving%
\[
(g_{j+1}-g_{j})+(g_{N-1}-g_{N})=2j+1-2N+1\leq0,
\]%
\[
(g_{1}-g_{0})-(g_{N}-g_{N-1})=1-(2N-1)\leq0.
\]
Then we have the following inequality for $M_{2}^{N}(t)$
\begin{align}
\sum_{0}^{N}j^{2}\dot{c}_{j}^{N}(t)  &  \leq\sum_{j=0}^{N-1}((j+1)^{2}%
-2j^{2}+(j-1)^{2})c_{j}^{N}\sum_{k=0}^{N-1}K(j,k)c_{k}^{N}\\
&  \leq2C\sum_{j=0}^{N-1}\sum_{k=0}^{N-1}(j^{\mu}k^{\nu}+j^{\nu}k^{\mu}%
)c_{j}^{N}c_{k}^{N}. \label{M2-sym}%
\end{align}
Now since $\mu+\nu\leq3$ for the exponents in (\ref{M2-sym}), there exists
$\bar{\mu}\geq\mu$ and $\bar{\nu}\geq\nu$ such that $\bar{\mu}+\bar{\nu}=3$.
Now, by Young's inequality we have
\begin{equation}
j^{\bar{\mu}}k^{\bar{\nu}}\leq\left(  \frac{2\bar{\mu}-\bar{\nu}}{3}%
j^{2}k+\frac{2\bar{\nu}-\bar{\mu}}{3}jk^{2}\right)  . \label{young}%
\end{equation}
Then (\ref{young}) and inequality (\ref{M2-sym}) together give
\begin{align*}
\dot{M}_{2}^{N}(t)  &  \leq C\sum_{j=0}^{N-1}\sum_{k=0}^{N-1}(j^{\bar{\mu}%
}k^{\bar{\nu}}+j^{\bar{\nu}}k^{\bar{\mu}})c_{j}^{N}c_{k}^{N}\leq C\sum
_{j=0}^{N-1}\sum_{k=0}^{N-1}(j^{2}k+jk^{2})c_{j}^{N}c_{k}^{N}\\
&  \leq CM_{2}^{N}(t)M_{1}^{N}(t)\leq CM_{2}^{N}(t),
\end{align*}
from which we deduce, by Gronwall's inequality,%
\[
M_{2}^{N}(t)\leq M_{2}(0)e^{Ct},
\]
which is a uniform bound for all $N.$ Then arguing as in Theorem 1 we find a
subsequence $c_{j}^{N(i)}(t)$ which converges uniformly to $c_{j}(t)$.
However, to prove that $c_{j}$ is a solution in the sense of Definition 1 we
need boundedness of higher moments, i.e., $M_{p}(t)<\infty$ for some $p>2$.
But, this now can be achieved using the boundedness of $M_{2}(t)$ which we
just have proved$.$ Indeed, let $g_{j}=j^{p}$ and take, without loss of
generality, $2<p<3$ where $M_{p}(0)<\infty.$ Then, by the mean value theorem
we see%
\[
(g_{j+1}-g_{j})-(g_{N}-g_{N-1})=(j+\theta_{1})^{p-1}-(N-1+\theta_{2}%
)^{p-1}\leq0.
\]
Similarly $(g_{j-1}-g_{j})+(g_{1}-g_{0})\leq0$ and $(g_{1}-g_{0}%
)-(g_{N}-g_{N-1})\leq0$, Hence, by Lemma 1, we have
\[
\sum_{0}^{N}j^{p}\dot{c}_{j}^{N}(t)\leq\sum_{j=0}^{N-1}((j+1)^{p}%
-2j^{p}+(j-1)^{p})c_{j}^{N}\sum_{k=0}^{N-1}K(j,k)c_{k}^{N}.
\]
Expanding the function $g(s)=s^{p}$ around $s=j$ in Taylor series up to second
order gives $\left\vert (j+1)^{p}-2j^{p}+(j-1)^{p}\right\vert \leq Cj^{p-2}$
and hence
\[
\sum_{0}^{N}j^{p}\dot{c}_{j}^{N}(t)\leq C\sum_{j=0}^{N-1}\sum_{k=0}%
^{N-1}j^{p-2}(j^{\mu}k^{\nu}+j^{\nu}k^{\mu})c_{j}^{N}(t)c_{k}^{N}(t)\leq
C\sum_{j=0}^{N-1}\sum_{k=0}^{N-1}j^{p-2}(j^{2}k+jk^{2})c_{j}^{N}(t)c_{k}%
^{N}(t)
\]
In the second step above we again used Young's inequality. Taking the sums on
the furthest right yields
\[
\sum_{0}^{N}j^{p}\dot{c}_{j}^{N}(t)\leq C(M_{p}^{N}(t)M_{1}+M_{p-1}%
^{N}(t)M_{2}(t))\leq C(M_{1}+M_{2}(t))M_{p}^{N}(t)
\]
which, by another use of Gronwall inequality, gives the bound $M_{p}%
^{N}(t)\leq M_{p}(0)e^{\int_{0}^{t}C(M_{1}+M_{2}(s))ds}.$ Repeating the
arguments in Theorem 1 proves that $c_{j}(t)$ is indeed a solution.
\end{proof}

\begin{remark}
\bigskip The growth assumption $K(j,k)\leq C(j^{\mu}k^{\nu}+j^{\nu}k^{\mu})$
$(\mu+\nu\leq3,$ $\mu,\nu\leq2)$ in the theorem was crucial to get the global
existence. This is in accordance with the physical studies which found regular
growth for the same regime assuming specific forms for the kernels. For
general symmetric kernels growing faster than the aforementioned rates we can
only prove local existence of solutions as shown in the following corollary.
\end{remark}

\begin{corollary}
\bigskip Consider the infinite EDG system (\ref{0-infode})-(\ref{infIC}). Let
$K(j,k)$ be nearly symmetric and satisfy $K(j,k)\leq Cj^{2}k^{2}$ (for $j,k$
large) and $M_{p}(0)<\infty$ for some $p>2.$ Then the system (\ref{0-infode}%
)-(\ref{infIC}) has a local solution $(c_{j})\in X_{2}.$
\end{corollary}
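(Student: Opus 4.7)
The plan is to reuse the machinery of Theorem 2 but only on a short time interval. When the kernel is as large as $j^{2}k^{2}$, the Young's-inequality step that produced a linear differential inequality for $M_{2}^{N}$ in Theorem 2 is no longer available: one is forced into a Riccati-type inequality which yields uniform-in-$N$ bounds only up to a finite time $T^{*}$ depending on $M_{2}(0)$.

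Concretely, I would first apply Lemma 1 (second identity) with $g_{j}=j^{2}$. As already checked in the proof of Theorem 2, the three boundary lines of (\ref{mom-red2}) are non-positive, leaving only the bulk term. Using $(j-1)^{2}-2j^{2}+(j+1)^{2}=2$ and $K(j,k)\le Cj^{2}k^{2}$ for $j,k$ large, together with a harmless bounded remainder from small $j,k$, this gives
\[
\dot M_{2}^{N}(t)\ \le\ C\sum_{j,k}j^{2}k^{2}\,c_{j}^{N}c_{k}^{N}\ =\ C\bigl(M_{2}^{N}(t)\bigr)^{2}.
\]
Comparison with the scalar ODE $\dot x=Cx^{2}$ then yields the uniform (in $N$) bound
\[
M_{2}^{N}(t)\ \le\ \frac{M_{2}(0)}{1-CM_{2}(0)t},\qquad 0\le t<T^{*}:=\frac{1}{CM_{2}(0)}.
\]

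Next, fix any $T_{0}\in(0,T^{*})$ and bound $M_{p}^{N}$ on $[0,T_{0}]$ for the exponent $p>2$ from the hypothesis, taking WLOG $p\in(2,3]$. The argument mirrors the second half of Theorem 2: with $g_{j}=j^{p}$ the boundary lines of (\ref{mom-red2}) are non-positive (verified there), and the Taylor estimate $|(j-1)^{p}-2j^{p}+(j+1)^{p}|\le Cj^{p-2}$ combined with $K(j,k)\le Cj^{2}k^{2}$ produces
\[
\dot M_{p}^{N}(t)\ \le\ C\sum_{j,k}j^{p-2}\!\cdot\! j^{2}k^{2}\,c_{j}^{N}c_{k}^{N}\ =\ C\,M_{p}^{N}(t)\,M_{2}^{N}(t).
\]
Since $M_{2}^{N}$ is bounded on $[0,T_{0}]$ by the first step, Gronwall gives a uniform bound on $M_{p}^{N}$ there. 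With uniform bounds on $c_{j}^{N}$, $\dot c_{j}^{N}$, and $M_{p}^{N}$ in hand, the Arzel\`a--Ascoli and tail arguments of Theorems 1 and 2 go through verbatim on $[0,T_{0}]$: the tail bound $\sum_{k>N_{2}}K(j,k)c_{k}^{N}\le Cj^{2}N_{2}^{2-p}M_{p}^{N}$ (and its first-index analogue) makes the exchange sums converge uniformly, and one passes to the limit in the integral form of the truncated equations to verify Definition 1 on $[0,T_{0}]$. The limit $c_{j}$ lies in $X_{2}$ because $\sum_{j}j^{2}c_{j}(t)\le\liminf_{N\to\infty}M_{2}^{N}(t)<\infty$; since $T_{0}\in(0,T^{*})$ is arbitrary, this yields the claimed local solution.

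The essential obstruction is the Riccati structure of the $M_{2}^{N}$ inequality. Because $K(j,k)$ is allowed to grow as fast as $j^{2}k^{2}$, the bulk estimate $\sum j^{2}k^{2}c_{j}^{N}c_{k}^{N}$ is \emph{exactly} $(M_{2}^{N})^{2}$, and no Young-type rearrangement can reduce the right-hand side to a product involving a lower-order moment as in Theorem 2. This genuine quadratic nonlinearity produces the unavoidable finite horizon $T^{*}$ and is consistent with (indeed, evidence for) the finite-time gelation conjectured in the intermediate regime.
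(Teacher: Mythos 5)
Your proposal is correct and follows essentially the same route as the paper: the quadratic (Riccati-type) inequality $\dot M_{2}^{N}\leq C\bigl(M_{2}^{N}\bigr)^{2}$ obtained from Lemma 1 with $g_{j}=j^{2}$, the resulting uniform-in-$N$ bound up to $T^{*}\sim 1/(CM_{2}(0))$, and then the Theorem 2 machinery (higher-moment bound for $M_{p}^{N}$, Arzel\`a--Ascoli, uniform convergence of the tail sums) on $[0,T_{0}]$ with $T_{0}<T^{*}$. You actually spell out the $M_{p}^{N}$ estimate $\dot M_{p}^{N}\leq C M_{p}^{N}M_{2}^{N}$ more explicitly than the paper does, but the argument is the same.
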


\begin{proof}
The proof takes similar steps to Theorem 3. Indeed, under the assumption
$K(j,k)\leq j^{2}k^{2}$, we again consider $\dot{M}_{2}^{N}(t)$
\begin{equation}
\sum_{0}^{N}j^{2}\dot{c}_{j}^{N}(t)\leq2C\sum_{j=0}^{N-1}\sum_{k=0}^{N-1}%
j^{2}k^{2}c_{j}^{N}c_{k}^{N}\leq2C\left(  M_{2}^{N}(t)\right)  ^{2}.
\end{equation}
Hence we can obtain
\[
M_{2}^{N}(t)\leq\frac{1}{\frac{1}{M_{2}^{N}(0)}-2Ct}\leq\frac{1}{\frac
{1}{M_{2}(0)}-2Ct}\text{ for }t<1/(2M_{2}(0)C),
\]
a uniform bound which is valid up to some certain finite time$.$ This
nevertheless allows us to construct a subsequence $c_{j}^{N(i)}$, as before,
which converges uniformly to a limit function $c_{j}(t).$ We can then get a
bound for $M_{p}(t)$ (valid up to a finite time $T$) and show that the partial
sums in the truncated system converges uniformly up to time $T$ which proves
the existence of local solutions.
\end{proof}

Theorem 2 and the corollary that follows give us signs of an intermediate
regime where the solutions behave differently. Previous heuristic studies with
special kernels of the form $K(j,k)=j^{\mu}k^{\nu}+j^{\nu}k^{\mu}$ suggest
that $\mu+\nu=3$ is the critical line for the onset of finite time gelation.
So, in light of the previous theorems we can make the following conjecture.

\textbf{Conjecture:} \textit{Consider the infinite EDG system (\ref{0-infode}%
)-(\ref{infIC}). Let the nearly symmetric kernel satisfy }$K(j,k)\geq Cj^{\mu
}k^{\nu},$\textit{ }$\mu+\nu>3.$\textit{ Then gelation occurs in finite time.}

The previous two theorems crucially made use of the boundedness of the initial
moments. We can relax this assumption by sacrificing on the growth rate of
$K.$ This was the approach taken by \cite{Leyvraz3} for the Smoluchowski
equation. More precisely, if we assume, for the non-symmetric kernel, the
growth rate%
\[
(I):K(j,k)\leq a(j)b(k)\text{ \ with }a(j),b(j)=o(j)
\]
then we have the following.

\begin{theorem}
Consider the EDG system given by (\ref{0-infode})-(\ref{infIC}). Let $K(j,k)$
be a general kernel satisfying the growth condition $(I)$ above. Suppose that
the system has finite initial total mass. Then the infinite system
(\ref{0-infode})-(\ref{infIC}) has a global solution $(c_{j})\in X_{1}$.
\end{theorem}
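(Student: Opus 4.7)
The plan is to run the truncation scheme of Theorem 1 but to replace the higher-moment input by the sublinearity $a(k),b(k)=o(k)$; this hypothesis is exactly what is needed to obtain uniform tail control from mass conservation alone. First I would consider the truncated system (\ref{Tode0})--(\ref{TodeIC}) and invoke Corollary 1 to retain both conservation laws $M_0^N(t)=M_0(0)$ and $M_1^N(t)=M_1(0)$. By Lemma 2 the truncated solutions are non-negative, so the zeroth-moment identity already yields the pointwise bound $c_j^N(t)\le M_0(0)$ uniformly in $N,j,t$.

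For equicontinuity of the truncated family, for each fixed $j$ I would use
\[
\sum_{k=0}^{N-1}b(k)c_k^N(t)\le b(0)M_0(0)+\Bigl(\sup_{k\ge 1}\frac{b(k)}{k}\Bigr)M_1(0),
\]
where the supremum is finite because $b(k)/k\to 0$; hence $\sum_{k}K(j,k)c_k^N(t)\le a(j)\,C_0$ and symmetrically $\sum_{k}K(k,j)c_k^N(t)\le b(j)\,C_0$, with $C_0$ depending only on the initial data. Substituting these into (\ref{Tode0})--(\ref{TodeN}) gives $|\dot c_j^N(t)|\le C(j)$ uniformly in $N,t$. An Arzel\`a--Ascoli argument combined with a diagonal extraction over $j$ then produces a subsequence $c_j^{N(i)}\to c_j$ uniformly on compact time intervals for every $j\ge 0$; the limit is continuous and non-negative, and Fatou applied to $M_1^{N(i)}(t)\equiv M_1^{N(i)}(0)\le M_1(0)$ yields $(c_j)\in X_1$ with $\|c(t)\|_1\le M_1(0)$.

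The main obstacle, and the place where the $o(j)$ hypothesis is truly used, is the uniform convergence of $\sum_k K(j,k)c_k^N$ to $\sum_k K(j,k)c_k$ (and analogously for $\sum_k K(k,j)c_k^N$), since the higher-moment cutoff used in Theorem 1 via (\ref{unidif1}) is no longer available. I would split the difference at a cutoff $N_2$: the head $\sum_{k\le N_2}K(j,k)|c_k^N-c_k|$ vanishes as $N\to\infty$ for fixed $j,N_2$ by the Arzel\`a--Ascoli step, while the tail satisfies
\[
\sum_{k>N_2}K(j,k)(c_k+c_k^N)\le a(j)\Bigl(\sup_{k>N_2}\frac{b(k)}{k}\Bigr)\bigl(M_1(0)+M_1^N(0)\bigr),
\]
which tends to $0$ as $N_2\to\infty$ precisely because $b(k)=o(k)$; the corresponding estimate for $\sum_k K(k,j)c_k^N$ uses $a(k)=o(k)$. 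With both series converging uniformly and dominated in time by fixed functions of $j$, one may pass to the limit inside the integral in the integrated form of the truncated equations; the limit $c_j(t)$ then satisfies the integral equation of Definition 1, the integrability condition (ii) follows from the pointwise bounds just established, and $(c_j)\in X_1$ holds globally, completing the construction.
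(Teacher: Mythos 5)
Your proposal is correct and follows essentially the same route as the paper: truncation, conservation of the zeroth and first moments, Arzel\`a--Ascoli, and then uniform convergence of the kernel sums by splitting at a cutoff $N_{2}$ with the tail controlled by the sublinearity $a(k),b(k)=o(k)$ against the conserved first moment. If anything, your write-up is slightly more careful than the paper's (which garbles the roles of $j$ and $k$ in its tail estimate and writes $b(N_{2})/N_{2}$ where your $\sup_{k>N_{2}}b(k)/k$ is the correct quantity), but the underlying argument is identical.
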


\begin{proof}
As in Theorem 1, we can use the boundedness of the zeroth and first moments of
the truncated system (for any $N)$ to construct a sequence of solutions that
converge uniformly to a continuous function on bounded time intervals $[0,T].$
However, to show that this is the desired solution, we also need prove that
$\sum_{k=0}^{N-1}K(j,k)c_{k}^{N}\rightrightarrows\sum_{k=0}^{\infty
}K(j,k)c_{k}$ as $N\rightarrow\infty.$ This can be shown using the growth rate
of the kernel%
\[
\left\vert \sum_{j=1}^{N-1}K(j,k)c_{k}^{N}-\sum_{j=1}^{\infty}K(j,k)c_{k}%
\right\vert =\left\vert \sum_{j=1}^{N_{2}}K(j,k)c_{k}^{N}-\sum_{j=1}^{N_{2}%
}K(j,k)c_{k}\right\vert +\left\vert \sum_{j=N_{2}+1}^{\infty}K(j,k)(c_{k}%
+c_{k}^{N})\right\vert .
\]

Now, the second term can be made arbitrarily small since the growth rate of
$K$ is slower than the decay of $c_{k},$ i.e., for large enough $N$%
\[
\left\vert \sum_{j=N_{2}+1}^{\infty}K(j,k)(c_{k}+c_{k}^{N})\right\vert
\leq2a(j)\frac{b(N_{2})}{N_{2}}M_{1}(0)\rightarrow0.
\]
The first term can be made as small as desired by letting $N$ grow (since
$c_{k}^{N}\rightarrow c_{k})$. Repeating the arguments of Theorem 1 we
complete the proof.
\end{proof}

We can prove a similar version of the above theorem for the symmetric kernels
assuming
\begin{equation}
(II):K(j,k)\leq a(j^{\mu})a(k^{\nu})+a(j^{\nu})a(k^{\mu})\text{ where
}a(j)=o(j)\text{ and }\mu+\nu\leq3.
\end{equation}

\begin{theorem}
Consider the EDG system given by (\ref{0-infode})-(\ref{infIC}). Let $K(j,k)$
be a nearly symmetric kernel satisfying the growth condition $(II)$ with
$\mu,\nu\leq2$ and $\mu+\nu\leq3$. Suppose for the initial distribution that
$M_{2}(0)<\infty$. Then the infinite system (\ref{0-infode})-(\ref{infIC}) has
a global solution.
\end{theorem}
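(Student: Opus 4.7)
The plan is to fuse the symmetric second-moment estimate of Theorem~2 with the tail-control argument of Theorem~3. Because $a(s)=o(s)$ forces $a(j^\mu)a(k^\nu)\le Cj^\mu k^\nu$ for large $j,k$, condition $(II)$ implies the kernel hypothesis of Theorem~2, so the proof of the $N$-uniform bound $M_2^N(t)\le M_2(0)e^{Ct}$ carries over: apply Lemma~1 with $g_j=j^2$, observe that the three boundary lines of \eqref{mom-red2} are non-positive, invoke Young's inequality to pass to exponents $\bar\mu+\bar\nu=3$, and close by Gr\"onwall against the conserved quantity $M_1^N\le M_1(0)$. Together with $c_j^N\le M_0(0)$ and a per-$j$ equicontinuity bound $|\dot c_j^N|\le C_j$ uniform in $N$, Arzel\`a--Ascoli (with a diagonal extraction over $j$) produces a subsequence $c_j^{N(i)}\to c_j(t)$ converging uniformly on every bounded interval $[0,T]$.

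The more delicate step is to pass to the limit inside $\sum_{k=0}^{N-1}K(j,k)c_k^N$ and its transpose. For a cut-off $N_2<N$, split this into a finite head $\sum_{k\le N_2}$, which vanishes as $N\to\infty$ by pointwise convergence of $c_k^N$, and a tail $\sum_{k>N_2}$. The tail is where the weaker moment assumption bites: exploiting $(II)$ with $\nu\le 2$ gives
\[
\sum_{k>N_2}a(k^\nu)c_k^N\le \sup_{k>N_2}\frac{a(k^\nu)}{k^\nu}\cdot\sum_{k>N_2}k^\nu c_k^N\le \sup_{k>N_2}\frac{a(k^\nu)}{k^\nu}\cdot M_2^N(t),
\]
which vanishes as $N_2\to\infty$ uniformly in $N$, since $a(s)=o(s)$ and $M_2^N(t)$ is $N$-uniformly bounded; the corresponding estimate with $\mu$ in place of $\nu$ disposes of the other summand of $K$. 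The transposed sum is handled identically, with the endpoint $k=N$ controlled via $c_N^N\le M_2^N(t)/N^2$ and $a(N^\mu)/N^\mu\to 0$. Once both kernel sums converge uniformly on $[0,T]$, passing to the limit in the integral form of \eqref{Tode0}--\eqref{TodeN} identifies $c_j(t)$ as a solution in the sense of Definition~1.

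The main obstacle is precisely the gap between the available moment $M_2$ and the moment $M_p$ with $p>2$ that Theorem~2 used to squeeze out the tail factor $k^{1-p}$. Here that decay must be supplied entirely by the sublinearity $a(k^\nu)/k^\nu\to 0$, which is exactly what turns the merely summable quantity $k^\nu c_k$ (summable because $\nu\le 2$ and $M_2<\infty$) into a genuine $o(1)$ tail. Everything else, including mass conservation and continuous differentiability of $c_j$, then follows the templates of Theorem~1 and Corollary~1 with no new ideas required.
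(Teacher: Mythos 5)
Your proposal is correct and follows essentially the same route as the paper: an $N$-uniform bound $M_2^N(t)\le M_2(0)e^{Ct}$ obtained from the symmetric cancellation of Lemma 1 together with Young's inequality and Gr\"onwall, an Arzel\`a--Ascoli extraction, and uniform convergence of the kernel sums with the tail supplied by the sublinearity $a(s)=o(s)$ played off against the second moment. The only (cosmetic) difference is that you extract $\sup_{k>N_2}a(k^{\nu})/k^{\nu}$ directly against $\sum_{k>N_2}k^{\nu}c_k^N\le M_2^N(t)$, whereas the paper chooses $N_2$ with $a(k)/k<\varepsilon$ and then reapplies Young's inequality to reduce to $j^2k+jk^2$; both arguments rest on exactly the same two ingredients.
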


\begin{proof}
The proof follows steps similar to Theorem 2. The difference is that, now, we
only have $M_{2}(0)<\infty$ for the initial distribution$.$ Since condition
$(II)$ holds, by Young's inequality one can show, as in Theorem 2,
$M_{2}(t)<\infty$ $\ $on any interval ($T<\infty)$ which allows us to
construct sequences of functions $c_{j}^{N}(t)$ which converge uniformly to
some function $c_{j}(t)$ which is continuous. To prove that $c_{j}(t)$ have
the desired properties as a solution, it is sufficient to show, arguing as in
the Theorem 3, that $\sum_{j=N}^{\infty}K(j,k)(c_{k}+c_{k}^{N})$ vanishes as
$N\rightarrow\infty$. Indeed, it is clear that$,$ for any $j\in\mathbb{N}$, we
have the bound $a(j)\leq Cj.$ Also, since $a(j)=o(j)$ we can choose $N_{2}%
\in\mathbb{N}$ large enough that, for arbitrary $\varepsilon>0,$ one has
$\frac{a(k^{\mu})}{k^{\mu}},\frac{a(k^{\nu})}{k^{\nu}}<\varepsilon$ when
$k>N_{2}.$ Then, we find
\begin{align*}
\left\vert \sum_{j=N_{2}}^{\infty}K(j,k)(c_{k}^{N}+c_{k})\right\vert  &
\leq\left\vert \sum_{j=N_{2}}^{\infty}\left(  a(j^{\mu})a(k^{\nu})+a(j^{\nu
})a(k^{\mu})\right)  (c_{k}^{N}+c_{k})\right\vert \\
&  \leq C\left\vert \sum_{j=N_{2}}^{\infty}(\varepsilon j^{2}k+\varepsilon
jk^{2})(c_{k}^{N}+c_{k})\right\vert \\
&  \leq2C\varepsilon\sup_{t\in\lbrack0,T]}M_{2}(t)M_{1}.
\end{align*}
where again we used Young's inequality in the second line. Since $\varepsilon$
is arbitrary the result follows.
\end{proof}

\section{UNIQUENESS,\ POSITIVITY AND NON-EXISTENCE}

Although the truncated system (\ref{Tode0})-(\ref{TodeIC}) has a unique
solution by the general ODE theory, the method of proof of existence we used
in the previous section does not guarantee uniqueness as there may be many
subsequences of $c_{j}^{N}$ which converges to different limit functions.
Hence, uniqueness has to be analyzed separately.

We provide two uniqueness results. Our first uniqueness result is for systems
with non-symmetric kernel. The idea is to control the "absolute" value of the
differences of two solutions, say $c_{j}$ and $d_{j},$ and show that
$c_{j}(t)=d_{j}(t)$ identically. The tricky part is the non-linear terms which
are of different signs.

\begin{theorem}
\label{uniqNsym}Consider the infinite ODE system (\ref{0-infode}%
)-(\ref{infIC}). Let the non-symmetric kernel satisfy $K(j,k)\leq Cjk$ (as in
Theorem 1). Then there is exactly one solution in $X_{2}$.
\end{theorem}

\begin{proof}
Let $c_{j}(t)$ and $d_{j}(t)$ two different solutions in $X_{2}$ with
$c_{j}(0)=d_{j}(0)$. Consider the difference $e_{j}(t)=$ $c_{j}(t)-d_{j}(t).$
Note that $\left\vert e_{j}(t)\right\vert $ is differentiable a.e. and
\[
\frac{d\left\vert e_{j}(t)\right\vert }{dt}=sgn(e_{j}(t))\frac{de_{j}(t)}%
{dt}=sgn(e_{j}(t))(\dot{c}_{j}(t)-\dot{d}_{j}(t)).
\]
Let $M_{e_{1}}(t):=\sum_{k=0}^{\infty}j\left\vert e_{j}(t)\right\vert .$ From
the rate equations, the difference $(\dot{c}_{j}(t)-\dot{d}_{j}(t))$ can be
estimated which will give terms of the form $\sum_{k=0}^{N-1}K(j,k)(c_{j}%
c_{k}-d_{j}d_{k})$ (or similarly $\sum_{k=0}^{\infty}K(k,j)(c_{j}c_{k}%
-d_{j}d_{k})).$ Consider the sum of the first $N$ terms of the series. Setting
$g_{j}=jsgn(e_{j})$, observing $(c_{j}c_{k}-d_{j}d_{k})=c_{j}e_{k}+e_{j}d_{k}$
and applying the index shifting argument as in Lemma 1 we have
\begin{align}
\sum_{j=0}^{N}jsgn(e_{j})\frac{de_{j}}{dt}  &  =\sum_{j=1}^{N}(g_{j-1}%
-g_{j})\sum_{k=0}^{\infty}K(j,k)c_{j}e_{k}+\sum_{j=1}^{N}(g_{j-1}-g_{j}%
)\sum_{k=0}^{\infty}K(j,k)e_{j}d_{k}\label{abs-der1a}\\
&  +\sum_{j=0}^{N-1}(-g_{j}+g_{j+1})\sum_{k=1}^{\infty}K(k,j)c_{j}e_{k}%
+\sum_{j=0}^{N-1}(-g_{j}+g_{j+1})\sum_{k=1}^{\infty}K(k,j)e_{j}d_{k}%
\label{abs-der1b}\\
&  +g_{N}\sum_{k=0}^{\infty}K(N+1,k)(c_{N+1}e_{k}+e_{N+1}d_{k})-g_{N}%
\sum_{k=1}^{\infty}K(k,N)(c_{N}e_{k}+e_{N}d_{k}).
\end{align}
Using the bounds on the kernel for the first terms on the right hand sides of
(\ref{abs-der1a}) and (\ref{abs-der1b}) gives%
\begin{align*}
\sum_{j=0}^{N}j\frac{d\left\vert e_{j}\right\vert }{dt}  &  \leq C\sum
_{j=1}^{N}\left\vert (g_{j-1}-g_{j})\right\vert jc_{j}M_{e_{1}}+\sum_{j=1}%
^{N}(g_{j-1}-g_{j})\sum_{k=0}^{\infty}K(j,k)e_{j}d_{k}\\
&  +C\sum_{j=0}^{N-1}\left\vert -g_{j}+g_{j+1}\right\vert jc_{j}M_{e_{1}}%
+\sum_{j=0}^{N-1}(-g_{j}+g_{j+1})\sum_{k=1}^{N}K(k,j)e_{j}^{N}d_{k}^{N}\\
&  +Cg_{N}(N+1)(c_{N+1}+d_{N+1})\sum_{k=0}^{\infty}k(c_{k}+d_{k}%
)+Cg_{N}N(c_{N}+d_{N})\sum_{k=1}^{\infty}k(c_{k}+d_{k})
\end{align*}
where we used $\left\vert e_{j}\right\vert \leq c_{j}+d_{j}$ and $c_{j}\leq C$
to simplify the sums in the last line. Since $\left\vert (g_{j-1}%
-g_{j})\right\vert \leq2j+1$ and $\left\vert -g_{j}+g_{j+1}\right\vert
\leq2j+1$ the inequality above can be written as
\begin{align*}
\dot{M}_{e_{1}}^{N}  &  \leq C\sum_{j=1}^{N}(2j+1)jc_{j}M_{e_{1}}^{N}%
+\sum_{j=1}^{N}((j-1)sgn(e_{j-1}^{N})-jsgn(e_{j}^{N}))e_{j}^{N}\sum
_{k=0}^{N-1}K(j,k)d_{k}^{N}\\
&  +C\sum_{j=0}^{N-1}(2j+1)jc_{j}M_{e_{1}}^{N}+\sum_{j=0}^{N-1}%
((j+1)sgn(e_{j+1}^{N})-jsgn(e_{j}^{N}))e_{j}^{N}\sum_{k=1}^{N}K(k,j)d_{k}%
^{N}\\
&  +Cg_{N}(N+1)(c_{N+1}+d_{N+1})M_{1}+Cg_{N}N(c_{N}+d_{N})M_{1}%
\end{align*}
The boundedness of the second moment $M_{2}(t)<\infty$ ($c,d\in X_{2})$
implies the terms in the last line vanish uniformly on bounded time intervals.
Taking the limit and noting $e_{j}^{N}=\left\vert e_{j}^{N}\right\vert
sgn(e_{j}^{N}),$ the terms on the right hand sides of first and second lines
can be bounded as
\begin{align*}
\dot{M}_{e_{1}}  &  \leq CM_{2}(t)M_{e_{1}}+\sum_{j=1}^{\infty}%
((j-1)sgn(e_{j-1})sgn(e_{j})-j)\left\vert e_{j}\right\vert \sum_{k=0}^{\infty
}K(j,k)d_{k}\\
&  +CM_{2}(t)M_{e_{1}}+\sum_{j=0}^{\infty}((j+1)sgn(e_{j+1})sgn(e_{j}%
)-j)\left\vert e_{j}\right\vert \sum_{k=1}^{\infty}K(k,j)d_{k}\\
&  \leq CM_{2}(t)M_{e_{1}}+C\sum_{j=0}^{\infty}j\left\vert e_{j}\right\vert
\sum_{k=1}^{\infty}(K(j,k)+K(k,j))d_{k}\\
&  \leq C(M_{2}(t)+M_{1})M_{e_{1}},
\end{align*}
where we used $(j-1)sgn(e_{j-1}^{N})sgn(e_{j}^{N})-j\leq0$ and
$(j+1)sgn(e_{j+1}^{N})sgn(e_{j}^{N})-j\leq1$ for the third line. Then,
applying Gronwall's lemma in the last line yields $M_{e_{1}}(t)\leq
Ce^{\int(M_{2}(s)+M_{1})ds}M_{e_{1}}(0).$ Hence we conclude $c_{j}%
(t)=d_{j}(t)$ (since $M_{e_{1}}(0)=0)$ for $j\geq1.$ To complete the proof we
also need to show $c_{0}(t)=d_{0}(t)$. Indeed, by $c_{j}(0)=d_{j}(0)$ and the
conservation of zeroth moment we have
\[
\sum_{k=1}^{\infty}c_{j}(t)=\sum_{k=1}^{\infty}c_{j}(0)=\sum_{k=1}^{\infty
}d_{j}(t).
\]
Then, since $c_{j}(t)=d_{j}(t)$ for $j\geq1$ as shown just above we
necessarily have $c_{0}(t)=d_{0}(t)$ proving uniqueness.
\end{proof}

Our second result in this section addresses the uniqueness of solutions for
symmetric kernels with faster growth. 

\begin{theorem}
\label{uniqSym} Let $c_{j}(t)$ be the solution of the system (\ref{0-infode}%
)-(\ref{infIC}) with a symmetric kernel satisyfing $K(j,k)\leq C(j^{\mu}%
k^{\nu}+j^{\nu}k^{\mu}),$ $(\mu+\nu\leq3)$ as in Theorem 2 (or $K(j,k)\leq
Cj^{2}k^{2}$ as in Corollary 3). Then there is exactly one solution to this
system in $X_{4}.$
\end{theorem}

\begin{proof}
We show the proof for kernels with the bound $K(j,k)\leq Cj^{2}k^{2},$ the
other case is similar. Let $c_{j}(t)$ and $d_{j}(t)$ be two different
solutions where again $e_{j}(t)=c_{j}(t)-d_{j}(t)$. Consider now the series
$M_{e_{2}}(t)=\sum_{j=1}^{\infty}j^{2}\left\vert e_{j}(t)\right\vert <\infty$.
Consider the difference $e_{j}(t)=$ $c_{j}(t)-d_{j}(t).$ Similar to the
uniqueness theorem for the non-symmetric (kernel) case we set $g_{j}%
=j^{2}sgn(e_{j})$. Then, by the symmetry of the kernel one can write%
\begin{align}
\sum_{j=0}^{N}j^{2}sgn(e_{j})\frac{de_{j}}{dt} &  =\sum_{j=1}^{N}%
(g_{j-1}-2g_{j}+g_{j+1})\sum_{k=0}^{\infty}K(j,k)c_{j}e_{k}\label{abs-der2a}\\
&  +\sum_{j=1}^{N}(g_{j-1}-2g_{j}+g_{j+1})\sum_{k=0}^{\infty}K(j,k)e_{j}%
d_{k}\\
&  +g_{N}\sum_{k=0}^{\infty}K(N+1,k)(c_{N+1}e_{k}+e_{N+1}d_{k})-g_{N+1}%
\sum_{k=1}^{\infty}K(k,N)(c_{N}e_{k}+e_{N}d_{k}).
\end{align}
Using the bounds on the kernel for the right hand side of (\ref{abs-der2a})
gives%
\begin{align}
\sum_{j=0}^{N}j^{2}\frac{d\left\vert e_{j}\right\vert }{dt} &  \leq
C\sum_{j=1}^{N}\left\vert (g_{j-1}-2g_{j}+g_{j+1})\right\vert j^{2}%
c_{j}M_{e_{2}}+\sum_{j=1}^{N}(g_{j-1}-2g_{j}+g_{j+1})\sum_{k=0}^{\infty
}K(j,k)e_{j}d_{k}\label{abs-der2b}\\
&  +Cg_{N}(N+1)^{2}(c_{N+1}+d_{N+1})\sum_{k=0}^{\infty}k^{2}(c_{k}%
+d_{k})+Cg_{N+1}N^{2}(c_{N}+d_{N})\sum_{k=1}^{\infty}k^{2}(c_{k}%
+d_{k})\nonumber
\end{align}
where again we used $K(j,k)\leq Cj^{2}k^{2}$ and $\left\vert e_{j}\right\vert
\leq c_{j}+d_{j}$ in the last line. Now if $c,d\in X_{4},$ the terms in the
second line of (\ref{abs-der2b}) vanishes uniformly on finite time intervals.
Also, the sums in (\ref{abs-der2a}) are bounded by $\,CM_{4}(t)M_{e_{2}}(t)$
since $\left\vert (g_{j-1}-2g_{j}+g_{j+1})\right\vert \leq Cj^{2}.$ Then
\begin{align*}
\dot{M}_{e_{2}}  & \leq CM_{4}(t)M_{e_{2}}(t)\\
& +\sum_{j=1}^{\infty}((j-1)^{2}sgn(e_{j-1}^{N})sgn(e_{j})-2j^{2}%
+(j+1)^{2}sgn(e_{j})sgn(e_{j+1}))\left\vert e_{j}\right\vert \sum
_{k=0}^{\infty}K(j,k)d_{k}%
\end{align*}
Since $(j\pm1)^{2}sgn(e_{j-1}^{N})sgn(e_{j})\leq(j\pm1)^{2}$ one gets
\begin{align*}
\dot{M}_{e_{2}} &  \leq CM_{4}(t)M_{e_{2}}(t)+\sum_{j=1}^{\infty}2\left\vert
e_{j}\right\vert \sum_{k=0}^{\infty}K(j,k)d_{k}\\
&  \leq CM_{4}(t)M_{e_{2}}(t)+CM_{e_{2}}(t)M_{2}(t)
\end{align*}
Then, by applying Gronwall's lemma one obtains $M_{e_{2}}(t)\leq
Ce^{\int(M_{4}(s)+M_{2})ds}M_{e_{2}}(0)$ which shows $c_{j}(t)=d_{j}(t)$ for
$j\geq1.$ Arguing as the previous theorem we also see $c_{0}(t)=d_{0}(t)$
completing the proof.
\end{proof}

Next we address another important property of the solutions: positivity, which
is not apparent from the equations as $\dot{c}_{j}$ terms have both,
positively and negatively signed terms. The next result guarantees this.

\begin{theorem}
\label{posFsym} Let $c_{j}(t)$ be a solution of (\ref{0-infode})-(\ref{infIC})
as in Theorem 1 (or Theorem 2). Suppose that $c_{j}(0)>0.$ Then, $c_{j}(t)>0$
for all $t>0.$
\end{theorem}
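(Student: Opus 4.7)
The plan is to establish non-negativity of $c_j(t)$ first and then upgrade it to strict positivity via a Duhamel (integrating-factor) representation. Non-negativity comes almost for free from the constructions used in Theorems 1 and 2: by Lemma 2 each truncated solution satisfies $c_j^N(t)\geq 0$, and since a subsequence $c_j^{N(i)}$ converges uniformly on compact time intervals to $c_j(t)$, the limit inherits $c_j(t)\geq 0$ for every $j$ and every $t$.

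With non-negativity in hand, I rewrite the $j$-th equation of (\ref{infode}) as the scalar linear inhomogeneous ODE $\dot c_j(t) + \alpha_j(t)\,c_j(t) = \beta_j(t)$, where
\[
\alpha_j(t) := \sum_{k=0}^{\infty} K(j,k)\,c_k(t) + \sum_{k=1}^{\infty} K(k,j)\,c_k(t),
\]
\[
\beta_j(t) := c_{j+1}(t)\sum_{k=0}^{\infty} K(j+1,k)\,c_k(t) + c_{j-1}(t)\sum_{k=1}^{\infty} K(k,j-1)\,c_k(t),
\]
with the obvious modification at $j=0$ (namely $\alpha_0(t) = \sum_{k\geq 1} K(k,0)c_k(t)$ and $\beta_0(t) = c_1(t)\sum_{k} K(1,k)c_k(t)$). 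By the first step, both $\alpha_j$ and $\beta_j$ are non-negative, and clause $(ii)$ of Definition 1 together with the moment bounds obtained in Theorems 1 and 2 guarantees $\int_0^T \alpha_j(s)\,ds < \infty$ on every finite interval $[0,T]$.

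Multiplying by the integrating factor $\exp\bigl(\int_0^t \alpha_j(s)\,ds\bigr)$ and integrating yields the Duhamel formula
\[
c_j(t) = c_j(0)\exp\!\Bigl(-\int_0^t \alpha_j(s)\,ds\Bigr) + \int_0^t \beta_j(s)\exp\!\Bigl(-\int_s^t \alpha_j(r)\,dr\Bigr)ds.
\]
The first term is strictly positive because $c_j(0)>0$ and the exponent is finite, while the second term is non-negative because $\beta_j\geq 0$. This immediately delivers $c_j(t)>0$ for every $t\geq 0$ and every $j\geq 0$.

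The only genuinely non-trivial point is ensuring that $\int_0^t \alpha_j(s)\,ds$ is finite, which is precisely what makes the Duhamel representation well-defined; this is built into Definition 1(ii) and is quantitatively supplied by the moment bounds established in the proofs of Theorems 1 and 2. Beyond that, the argument is a routine application of the integrating-factor method and requires no further machinery.
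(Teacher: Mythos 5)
Your proposal is correct and follows essentially the same route as the paper: both establish non-negativity first and then upgrade it to strict positivity by multiplying the rearranged rate equation by the integrating factor $\exp\bigl(\int_0^t(S(j,c(s))+\bar S(j,c(s)))\,ds\bigr)$ and observing that the resulting Duhamel representation has a strictly positive first term and a non-negative integral term. The only minor difference is that you obtain the preliminary non-negativity by passing it from the truncated solutions $c_j^N\geq 0$ to their uniform limit, whereas the paper re-runs the first-zero-crossing argument of Lemma 2 directly on the infinite system; your variant is, if anything, slightly cleaner, since it avoids having to locate a first crossing time over infinitely many components.
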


\begin{proof}
Let $S(j,c)=\sum_{k=0}^{\infty}K(j,k)c_{k},~\bar{S}(j,c)=\sum_{k=1}^{\infty
}K(k,j)c_{k}.$ Arguing as in Lemma 2, since $c_{j}(0)\geq0,$ we can easily
show that $c_{j}(t)\geq0.$ To strengthen the result, we rearrange the rate
equations and multiply the terms by the appropriate integrating factor to get
\[
\frac{d}{dt}\left[  c_{0}(t)e^{\int_{0}^{t}S(0,c(s))ds)}\right]
=c_{1}(t)S(1,c)e^{\int_{0}^{t}S(0,c(s))ds},
\]%
\[
\frac{d}{dt}\left[  c_{j}(t)e^{\int_{0}^{t}(S(j,c(s))+S(j,c(s)))ds}\right]
=(c_{j+1}(t)S(j+1,c)+c_{j-1}(t)\bar{S}(j-1,c))e^{\int_{0}^{t}%
(S(j,c(s))+S(j,c(s)))ds}.
\]

The operations on the left hand side are allowed since $S(j,c)$ and $\bar
{S}(j,c)$ are continuous by uniform convergence. Integrating this equation we
see
\begin{align*}
c_{j}(t)e^{\int_{0}^{t}(S(j,c(s))+S(j,c(s)))ds}  &  =c_{j}(0)+\int_{0}%
^{t}c_{j+1}(\tau)S(j+1,c)e^{\int_{0}^{\tau}(S(j,c(s))+S(j,c(s)))ds}d\tau\\
&  +\int_{0}^{t}c_{j-1}(\tau)\bar{S}(j-1,c)e^{\int_{0}^{\tau}%
(S(j,c(s))+S(j,c(s)))ds}d\tau.
\end{align*}
from which it follows that if $c_{j}(t)>0$\ for all $j$ since the integrals on
the right hand side are non-negative.
\end{proof}

Our final results concern the non-existence of solutions. It has been known
\cite{Dongen} and in some cases has been rigorously shown, that, for the
Smoluchowski and Becker-Doring type models, super-linearly growing kernels may
lead to non-existence \cite{Ball}, \cite{Carr}.

In EDG systems, we showed in the previous section that global solutions exist
for non-symmetric kernels satisfying $K(j,k)\leq Cjk$ and local solutions
persist for nearly symmetric kernels satisfying $K(j,k)\leq Cj^{2}k^{2}$. For
specific kernels of the form $K(j,k)=j^{\mu}k^{\nu}+j^{\nu}k^{\mu}$ ($\mu
,\nu>2)$ physical studies \cite{Naim} suggest that gelation takes place
instantaneously which is a sign of a pathological behavior. Below, taking the
approach of \cite{Ball}, we show, under some technical conditions on the
initial data and faster growth assumptions on the kernel, that the solutions
cannot exist.

To prove the result one needs to understand how the tail of the distribution
behaves with fast growing kernels. For this purpose, it will be useful to
write the infinite system as a system of density-flow equations, i.e.,
\[
\dot{c}_{j}(t)=I_{j-1}(c)-I_{j}(c),
\]
where
\begin{equation}
I_{j}(c)=c_{j}\sum_{k=1}^{\infty}K(k,j)c_{k}-c_{j+1}\sum_{k=0}^{\infty
}K(j+1,k)c_{k}. \label{flow-equ}%
\end{equation}
Again we provide two different results for the non-symmetric kernel and
symmetric kernel. For both of the results we will need the following lemma
which is a straightforward computation.

\begin{lemma}
\label{L-tail}Let $c_{j}(t)$ be a solution of the EDG system (\ref{0-infode}%
)-(\ref{infIC}). Then one has the following identities%
\begin{align*}
\sum_{j=m}^{\infty}c_{j}(t)-\sum_{j=m}^{\infty}c_{j}(0)  &  =\int_{0}%
^{t}I_{m-1}(c(s))ds,\\
\sum_{j=m}^{\infty}jc_{j}(t)-\sum_{j=m}^{\infty}jc_{j}(0)  &  =\int_{0}%
^{t}\sum_{j=m}^{\infty}I_{j}(c(s))ds+m\int_{0}^{t}I_{m-1}(c(s))ds,\\
\sum_{j=m}^{\infty}j^{2}c_{j}(t)-\sum_{j=m}^{\infty}j^{2}c_{j}(0)  &
=\int_{0}^{t}\sum_{j=m}^{\infty}(2j+1)I_{j}(c(s))ds+m^{2}\int_{0}^{t}%
I_{m-1}(c(s))ds.
\end{align*}

\end{lemma}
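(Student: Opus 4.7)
The plan is to work directly from the mass-flow form $\dot c_j = I_{j-1}(c) - I_j(c)$ and derive each identity by multiplying by a weight $g_j \in \{1, j, j^2\}$, summing over $j=m,\dots,N$, telescoping via an index shift, and then passing to the limit $N\to\infty$ followed by integration in time. Everything reduces to the elementary observation that
\begin{equation*}
\sum_{j=m}^{N} g_j \bigl(I_{j-1}(c) - I_j(c)\bigr)
= g_m I_{m-1}(c) + \sum_{k=m}^{N-1}(g_{k+1}-g_k) I_k(c) - g_N I_N(c),
\end{equation*}
obtained by shifting $j\mapsto k+1$ in the first sum and collecting like indices.

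Applying this with $g_j\equiv 1$ gives $\frac{d}{dt}\sum_{j=m}^N c_j = I_{m-1}(c) - I_N(c)$, which integrates to $\sum_{j=m}^N c_j(t) - \sum_{j=m}^N c_j(0) = \int_0^t(I_{m-1}-I_N)\,ds$. With $g_j=j$ the weight difference $g_{k+1}-g_k=1$ yields the middle identity modulo the boundary term $N I_N$, and with $g_j=j^2$ the difference $g_{k+1}-g_k = 2k+1$ produces exactly the claimed $(2j+1)I_j(c)$ summand, again up to the boundary term $N^2 I_N$. In each case one then lets $N\to\infty$ and applies Fubini to bring the $t$-integral outside.

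The main obstacle is justifying the passage to the limit: one must show $N I_N(c(s))\to 0$ and $N^2 I_N(c(s))\to 0$, and that the residual series $\sum_{k=m}^\infty I_k(c(s))$ and $\sum_{k=m}^\infty (2k+1) I_k(c(s))$ converge, uniformly enough on $[0,t]$ to swap limit and integral. From the definition
\begin{equation*}
I_N(c) = c_N \sum_{k=1}^\infty K(k,N) c_k - c_{N+1}\sum_{k=0}^\infty K(N+1,k)c_k,
\end{equation*}
the required decay of $g_N I_N$ follows from the moment bounds already available: under the hypotheses of Theorem 1 or Theorem 2 the sums $\sum_k k c_k$, $\sum_k k^2 c_k$ (and higher moments via the preceding lemma) are finite on $[0,t]$, so $c_N$ and $c_{N+1}$ decay faster than any given inverse power of $N$ along a subsequence, and the kernel bound $K(j,k)\le C(j^\mu k^\nu + j^\nu k^\mu)$ with $\mu,\nu\le 2$ keeps $g_N I_N$ summable in $N$. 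The same moment estimates give absolute convergence of $\sum_k (2k+1) I_k$, which is the dominant series among the three cases, and provide the dominating function needed for Fubini. Once these technical bounds are in place the three identities drop out by reading off the telescoping formula with $g_j=1,j,j^2$ respectively.
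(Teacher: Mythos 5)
The paper offers no proof of this lemma at all, dismissing it as ``a straightforward computation''; your telescoping identity with weights $g_j=1,\,j,\,j^2$ applied to $\dot c_j=I_{j-1}(c)-I_j(c)$ is exactly that computation, and the algebra (in particular $g_{k+1}-g_k=2k+1$ producing the $(2j+1)I_j$ summand) is correct. Your attention to the boundary terms $g_N I_N$ and to the absolute convergence of $\sum_k(2k+1)I_k$ under the available moment bounds is more care than the paper supplies and addresses the only genuinely nontrivial point.
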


For the non-symmetric kernel we make the extra assumption that cluster
interaction kernels are biased, i.e., $K(k,j)>K(j,k)$ for $j>k.$ This is
reasonable assumption for systems that prefers exchanges towards bigger
clusters (e.g. migration towards bigger cities). If the exchange rate grows
faster than linearly this will cause non-existence as we see in the next theorem.

\begin{theorem}
\label{T-non1}Consider the infinite EDG system (\ref{0-infode})-(\ref{infIC})
with $c_{j}(0)>0$ for some $j$. Let $K(j,k)\geq Cj^{\beta}$ hold for some
$\beta>1.$ Assume that $K(j,0)=0,$ $K(k,j)\geq(1+\varepsilon)K(j,k)$ for
$j>k\geq1$ and some $\varepsilon>0.$ Assume further that $\lim_{m\rightarrow
\infty}e^{\delta m^{\beta-1}}\sum_{j=m}^{\infty}(j-m)c_{j}(0)\nrightarrow0$
for all $\delta>0$ (either the limit is striclty greater than zero or does not
exist). Then there exists no solution $c_{j}(t)\in X_{1}$ of (\ref{0-infode}%
)-(\ref{infIC}) on any interval $[0,T)$ $(T>0)$.
\end{theorem}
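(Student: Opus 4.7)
The plan is to track the tail functional $F_m(t) = \sum_{j \geq m}(j-m)c_j(t)$, derive a differential inequality of the form $\dot F_m \geq C\, m^{\beta-1} F_m$ on a short interval $[0,\tau]$, integrate it via Grönwall, and then use the hypothesis on $F_m(0)$ to force $F_{m_i}(t^*) \to \infty$ along some subsequence $m_i \to \infty$, contradicting the fact that $F_m(t^*) \to 0$ for any $c \in X_1$.

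Subtracting $m\cdot$(first identity) from the second identity of Lemma \ref{L-tail} gives $F_m(t) - F_m(0) = \int_0^t \sum_{j\geq m} I_j(c(s))\, ds$. Writing $I_j$ out and shifting the index on the second piece, then removing the portion of the inner double sum where both $j,k>m$ (which is antisymmetric in $(j,k)$ over a symmetric region and hence vanishes), yields the key identity
\[
\sum_{j\geq m} I_j = c_m \sum_{k\geq 1} K(k,m)c_k + \sum_{j>m} \sum_{k=1}^m c_j c_k\bigl[K(k,j)-K(j,k)\bigr].
\]
Both pieces are non-negative by the bias hypothesis (since $k \leq m < j$ forces $k < j$, and then $K(k,j) - K(j,k) \geq \varepsilon K(j,k) \geq \varepsilon C j^\beta$). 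Using $j^\beta \geq (j-m) m^{\beta-1}$ for $j > m$ extracts the bound
\[
\dot F_m(t) \geq \varepsilon C\, m^{\beta-1}\, F_m(t)\, \sum_{k=1}^m c_k(t).
\]

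To close the Grönwall loop I need a uniform positive lower bound on $\sum_{k=1}^m c_k(t)$. The tail hypothesis forces $F_m(0)>0$ for arbitrarily large $m$, hence there exists some $j_0 \geq 1$ with $c_{j_0}(0)>0$. Continuity of $c_{j_0}$ then yields $c_{j_0}(t)\geq c_{j_0}(0)/2 =: \delta_0$ on an interval $[0,\tau]$, so $\sum_{k=1}^m c_k(t) \geq \delta_0$ whenever $m\geq j_0$ and $t\in[0,\tau]$. Grönwall on $[0,\tau]$ then gives $F_m(t)\geq F_m(0)\exp(C_0 m^{\beta-1}t)$ with $C_0=\varepsilon C\delta_0$.

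Given any putative solution on $[0,T)$, pick $t^*\in(0,\min(T,\tau))$ and set $\delta=C_0 t^*/2$. The hypothesis that $e^{\delta m^{\beta-1}}F_m(0)\not\to 0$ provides $\eta>0$ and $m_i\to\infty$ with $F_{m_i}(0)\geq \eta\, e^{-\delta m_i^{\beta-1}}$. Then $F_{m_i}(t^*)\geq \eta\, e^{(C_0 t^*-\delta)m_i^{\beta-1}} = \eta\, e^{\delta m_i^{\beta-1}}\to\infty$, while on the other hand $F_m(t^*)\leq \sum_{j\geq m} j c_j(t^*)\to 0$ because $c(t^*) \in X_1$ — the desired contradiction. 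The main obstacles are (a) the algebraic collapse to the pure $k\leq m<j$ double sum, where the antisymmetric cancellation for $k,j>m$ is essential because without bias the cross-terms could carry either sign, and (b) securing the positivity of $\sum_{k=1}^m c_k(t)$, which requires extracting a single populated component $c_{j_0}$ from the tail hypothesis and exploiting only its continuity in $t$.
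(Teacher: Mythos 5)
Your proposal is correct and follows essentially the same route as the paper's proof: the tail functional $\sum_{j\geq m}(j-m)c_{j}$, the mass--flux identity obtained from Lemma \ref{L-tail}, the vanishing of the antisymmetric summand over the symmetric region where both indices exceed $m$, the bias hypothesis on the remaining region $j>k$, the lower bound $K(j,k)\geq Cj^{\beta}\geq C(j-m)m^{\beta-1}$, Gronwall, and a contradiction with membership in $X_{1}$. The one step where you genuinely diverge is the positive lower bound on $\sum_{k=1}^{m}c_{k}(t)$: the paper infers $\sum_{k=1}^{m-1}c_{k}(s)\geq C>0$ from $c_{0}(0)>0$ and the monotone increase of $c_{0}(t)$ under $K(j,0)=0$ (a justification that needs an extra word, since $c_{0}$ increasing makes $\sum_{k\geq1}c_{k}$ decrease), whereas you extract a single index $j_{0}$ with $c_{j_{0}}(0)>0$ from the tail hypothesis and use continuity of $c_{j_{0}}$ on a short interval $[0,\tau]$. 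Your variant is self-contained and does not even use the hypothesis $c_{0}(0)>0$, and restricting to $[0,\tau]$ costs nothing because the contradiction is produced at a single small time $t^{*}$.
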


\begin{proof}
We will prove the result by contradiction. Suppose that there is a solution.
From the first and second identity of Lemma \ref{L-tail} above one has
\begin{equation}
\sum_{j=m}^{\infty}(j-m)c_{j}(t)-\sum_{j=m}^{\infty}(j-m)c_{j}(0)=\int_{0}%
^{t}\sum_{j=m}^{\infty}I_{j}(c(s))ds. \label{tail-df}%
\end{equation}
Writing in the expression for $I_{j}(c(s))$ from (\ref{flow-equ}) on the right
hand side of (\ref{tail-df}) reads%
\begin{align}
\int_{0}^{t}\sum_{j=m}^{\infty}I_{j}(c(s))ds  &  =\int_{0}^{t}\sum
_{j=m}^{\infty}c_{j}(s)\sum_{k=1}^{\infty}K(k,j)c_{k}(s)ds\label{flow-sh}\\
&  -\int_{0}^{t}\sum_{j=m}^{\infty}c_{j+1}(s)\sum_{k=0}^{\infty}%
K(j+1,k)c_{k}(s)ds. \label{nonex-sh}%
\end{align}
Shifting the index on the second term of (\ref{nonex-sh}), using $K(j,0)=0$ to
remove the $k=0$ terms and matching the lower bounds of the sums we have%
\[
\int_{0}^{t}\sum_{j=m}^{\infty}I_{j}(c(s))ds=
\]%
\[
\int_{0}^{t}\sum_{j=m}^{\infty}\sum_{k=1}^{\infty}c_{j}(s)K(k,j)c_{k}%
(s)ds-\int_{0}^{t}\sum_{j=m}^{\infty}\sum_{k=1}^{\infty}c_{j}(s)K(j,k)c_{k}%
(s)ds+\int_{0}^{t}c_{m}(s)\sum_{k=1}^{\infty}K(m,k)c_{k}(s)ds.
\]
Splitting the sums as $\sum_{j=m}^{\infty}\sum_{k=1}^{\infty}(...)=\sum
_{j=m}^{\infty}\sum_{k=1}^{m-1}(...)+\sum_{j=m}^{\infty}\sum_{k=m}^{\infty
}(...)$ and using the non-negativity of $\sum_{k=1}^{\infty}K(m,k)c_{k}(s)$
sum yields%
\begin{align*}
\int_{0}^{t}\sum_{j=m}^{\infty}I_{j}(c(s))ds  &  \geq\int_{0}^{t}\sum
_{j=m}^{\infty}\sum_{k=1}^{m-1}c_{j}(s)(K(k,j)-K(j,k))c_{k}(s)ds\\
&  +\int_{0}^{t}\sum_{j=m}^{\infty}\sum_{k=m}^{\infty}c_{j}%
(s)(K(k,j)-K(j,k))c_{k}(s)ds.
\end{align*}
Note that the second double-sum on the right hand side is zero by the symmetry
of the sum and hence by (\ref{tail-df}) we are left with
\begin{align}
\sum_{j=m}^{\infty}(j-m)c_{j}(t)-\sum_{j=m}^{\infty}(j-m)c_{j}(0)  &  \geq
\int_{0}^{t}\sum_{j=m}^{\infty}\sum_{k=1}^{m-1}c_{j}(s)(K(k,j)-K(j,k))c_{k}%
(s)ds\\
&  \geq\int_{0}^{t}\sum_{j=m}^{\infty}\sum_{k=1}^{m-1}\varepsilon
c_{j}(s)K(j,k)c_{k}(s)ds.
\end{align}
Using the lower bound $K(j,k)\geq Cj^{\beta}$ one has
\begin{align}
\sum_{j=m}^{\infty}(j-m)c_{j}(t)  &  \geq\sum_{j=m}^{\infty}(j-m)c_{j}%
(0)+\int_{0}^{t}\varepsilon C\sum_{j=m}^{\infty}j^{\beta}c_{j}(s)\sum
_{k=1}^{m-1}c_{k}(s)ds\\
&  \geq\sum_{j=m}^{\infty}(j-m)c_{j}(0)+\varepsilon Cm^{\beta-1}\int_{0}%
^{t}\sum_{j=m}^{\infty}jc_{j}(s)ds. \label{yokluk-ineq}%
\end{align}
For the second line we used $\sum_{k=1}^{m-1}c_{k}(s)\geq C>0$ which is a
consequence of the fact $c_{j}(0)>0$ for some $j$ which is aconsequence of the
continuity of solutions and the fact that the system has non-zero initial mass
(and hence the finite sum is strictly greater than zero for large enough $m$).
Since $\sum_{j=m}^{\infty}jc_{j}(t)>\sum_{j=m}^{\infty}(j-m)c_{j}(t),$ one
gets the differential inequality below%
\[
\sum_{j=m}^{\infty}jc_{j}(t)\geq\sum_{j=m}^{\infty}(j-m)c_{j}(0)+\varepsilon
Cm^{\beta-1}\int_{0}^{t}\sum_{j=m}^{\infty}jc_{j}(s)ds,
\]
from which we get the inequality%
\[
\sum_{j=m}^{\infty}jc_{j}(t)\geq e^{\varepsilon Cm^{\beta-1}t}\sum
_{j=m}^{\infty}(j-m)c_{j}(0).
\]
Since $\lim_{m\rightarrow\infty}e^{\varepsilon Cm^{\beta-1}t}\sum
_{j=m}^{\infty}(j-m)c_{j}(0)\nrightarrow0$ for any $t>0$ by our assumption we
arrive at $\lim_{m\rightarrow\infty}\sum_{j=m}^{\infty}jc_{j}(t)>0$ which is a contradiction.
\end{proof}

\textbf{Example:} The condition on $c_{j}(0)$ in the above theorem can be
achieved by many kinds of initial distributions with algebraically decaying
tails. Consider, for instance, $c_{j}(0)=\frac{1}{j^{q}}$ with any $q>1.$
Then,
\[
\sum_{j=m}^{\infty}(j-m)c_{j}(0)\geq\sum_{j=m+1}^{\infty}c_{j}(0)=\sum
_{j=m+1}^{\infty}\frac{1}{j^{q}}%
\]
Comparing the sum $\sum_{j=m+1}^{\infty}\frac{1}{j^{q}}$ with the integral
$\int_{m+1}^{\infty}\frac{dy}{y^{q}}$ we obtain
\[
\sum_{j=m}^{\infty}(j-m)c_{j}(0)\geq\frac{1}{(q-1)(m+1)^{q-1}}>0.
\]
Then in the limit $m\rightarrow\infty$ the condition of the theorem is
satisfied for any $q>1$ since%
\[
\lim_{m\rightarrow\infty}e^{\delta m^{\beta-1}}\left(  \frac{1}%
{(q-1)(m+1)^{q-1}}\right)  >0.
\]
If we assume faster growth such as $\beta>2,$ then the condition in the
theorem is satisfied even by distributions with light tails. Indeed, let
$c_{j}(0)=\kappa^{j},$ $\kappa<1.$ Then, $\sum_{j=m}^{\infty}(j-m)c_{j}%
(0)=\frac{\kappa^{m+1}}{(1-\kappa)^{2}}.$ Hence one has, for any $\delta,$
\[
\lim_{m\rightarrow\infty}e^{\delta m^{\beta-1}}\sum_{j=m}^{\infty}%
(j-m)c_{j}(0)=\lim_{m\rightarrow\infty}e^{\delta m^{\beta-1}-C-m\ln
(\kappa)+\ln(\kappa/(1-\kappa)^{2})}>0
\]
satisfying the condition of the theorem.

The previous theorem relied on the assumption that pairwise interactions
favored bigger sizes. For symmetric kernels, there is no such favoring and
non-existence cannot take place unless $K(j,k)$ grows faster (agreeing with
the existence results of the previous section). However, we have the following result.

\begin{theorem}
Consider the infinite EDG system (\ref{0-infode})-(\ref{infIC}). Suppose that
the symmetric kernel satisfies $K(j,k)\geq Cj^{\beta}$ for some $\beta>2.$
Assume also that $\lim_{m\rightarrow\infty}e^{\delta m^{\beta-2}}\sum
_{j=m}^{\infty}(j^{2}-m^{2})c_{j}(0)\nrightarrow0$ for all $\delta>0$. Then
there exists no solution $c_{j}(t)\in X_{2}$ of (\ref{0-infode})-(\ref{infIC})
on any interval $[0,T)$ $(T>0)$.
\end{theorem}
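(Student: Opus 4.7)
The plan is to adapt the mass-flow argument of Theorem~\ref{T-non1} to the symmetric case, but working with the \emph{second}-moment tail
$$\phi_m(t) = \sum_{j=m}^{\infty}(j^2-m^2)c_j(t),$$
rather than the first-moment tail. The choice is dictated by the exponent $\beta-2$ appearing in the hypothesis on $c_j(0)$, and the target is a Gronwall-style lower bound of the form $\phi_m(t)\geq \phi_m(0)\exp(Cm^{\beta-2}\gamma(t))$, which combined with the tail hypothesis will force $\phi_m(t)\not\to 0$ as $m\to\infty$, contradicting $c(t)\in X_2$. I proceed by contradiction: assume a solution exists on $[0,T)$.

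First I would combine the first and third identities of Lemma~\ref{L-tail} (the third minus $m^2$ times the first) to kill the boundary term $m^2\int_0^t I_{m-1}\,ds$, giving the clean integral identity
$$\phi_m(t)-\phi_m(0) = \int_0^t \sum_{j=m}^{\infty}(2j+1)I_j(c(s))\,ds.$$
Next, writing $S_j := c_j\sum_{k=1}^{\infty}K(j,k)c_k$, the hypothesis $K(j,0)=0$ lets me drop the $k=0$ term in $I_j$ and the symmetry of $K$ then gives $I_j=S_j-S_{j+1}$. An index shift telescopes the weighted sum to
$$\sum_{j=m}^{\infty}(2j+1)I_j = (2m+1)S_m + 2\sum_{j=m+1}^{\infty}S_j,$$
which is manifestly non-negative. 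Discarding the boundary term and invoking $K(j,k)\geq Cj^{\beta}$ together with the elementary bound $j^{\beta}\geq m^{\beta-2}j^2$ valid for $j\geq m$ yields
$$\sum_{j=m}^{\infty}(2j+1)I_j(c(s)) \geq 2Cm^{\beta-2}\rho(s)\,\phi_m(s),$$
where $\rho(s):=\sum_{k=1}^{\infty}c_k(s)$.

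Feeding this into the integral identity and applying Gronwall produces
$$\phi_m(t) \geq \phi_m(0)\exp\!\left(2Cm^{\beta-2}\int_0^t \rho(s)\,ds\right).$$
To close the argument I need $\gamma(t):=\int_0^t\rho(s)\,ds>0$ on $(0,T)$. The tail hypothesis forces $c_j(0)>0$ for arbitrarily large $j$, so $\rho(0)>0$; under $K(j,0)=0$ equation~(\ref{0-infode}) gives $\dot c_0\geq 0$, so $\rho(t)=M_0(0)-c_0(t)$ is continuous and non-increasing, and in particular $\gamma(t)>0$ for every $t\in(0,T)$. Setting $\delta=2C\gamma(t)$ in the hypothesis then yields $\phi_m(t)\not\to 0$ as $m\to\infty$, whereas $c(t)\in X_2$ forces $\phi_m(t)\leq \sum_{j\geq m}j^2 c_j(t)\to 0$. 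This contradiction concludes the proof.

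The step I expect to be most delicate is the algebraic reduction of $\sum_{j\geq m}(2j+1)I_j$ to the compact non-negative form above: this requires \emph{both} the symmetry of $K$, in order to rewrite the outflow term as $S_{j+1}$ and thus obtain exact telescoping against the inflow term, \emph{and} the assumption $K(j,0)=0$, needed to safely drop the $k=0$ contribution without spoiling positivity. After that cancellation the remaining ingredients — Gronwall, continuity of $\rho$, and the conversion of the tail hypothesis into the final contradiction — are entirely routine.
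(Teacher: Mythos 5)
Your proof is correct and follows essentially the same route as the paper's: the same identities from Lemma \ref{L-tail}, the same use of symmetry and $K(j,0)=0$ to reduce $\sum_{j\ge m}(2j+1)I_j$ to a manifestly non-negative combination of the quantities $S_j$, the same bound $j^{\beta}\ge m^{\beta-2}j^{2}$ for $j\ge m$, and the same Gronwall argument contradicting $M_2(t)<\infty$; your telescoping via $I_j=S_j-S_{j+1}$ is simply a tidier rendering of the paper's index-matching and cancellation. The only minor caveat is that writing $\rho(t)=M_0(0)-c_0(t)$ presupposes zeroth-moment conservation for the hypothetical $X_2$ solution, which is not established under these hypotheses, but this is harmless since $\int_0^t\rho(s)\,ds>0$ already follows from the continuity of a single $c_{j_0}$ with $c_{j_0}(0)>0$ (guaranteed by the tail assumption).
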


\begin{proof}
We go by contradiction as in Theorem \ref{T-non1}. Let $c_{j}(t)\in X_{2}$ be
a solution on $[0,T).$ Then $M_{2}(t)<\infty$ for $t<T.$ Using the first and
third identities of Lemma \ref{L-tail} we have
\begin{equation}
\sum_{j=m}^{\infty}(j^{2}-m^{2})c_{j}(t)-\sum_{j=m}^{\infty}(j^{2}-m^{2}%
)c_{j}(0)=\int_{0}^{t}\sum_{j=m}^{\infty}(2j+1)I_{j}(c(s))ds.\label{tail-df2}%
\end{equation}
Pulling $I_{j}(c(s))$ from (\ref{flow-equ}) and placing it on the right hand
side of (\ref{tail-df2}) and shifting the index for the $c_{j+1}$ term reads%
\begin{align}
\int_{0}^{t}\sum_{j=m}^{\infty}(2j+1)I_{j}(c(s))ds &  =\int_{0}^{t}\sum
_{j=m}^{\infty}(2j+1)c_{j}(s)\sum_{k=1}^{\infty}K(k,j)c_{k}%
(s)ds\label{flow-sh2}\\
&  -\int_{0}^{t}\sum_{j=m+1}^{\infty}(2j-1)c_{j}(s)\sum_{k=0}^{\infty
}K(j,k)c_{k}(s)ds.\label{nonex-sh2}%
\end{align}
Matching the lower indices in (\ref{flow-sh2}), (\ref{nonex-sh2}) for the $j$
sums and removing the $k=0$ terms in (\ref{nonex-sh2}) by symmetry ($K(0,j)=0$
identically$)$ one gets the inequality%
\[
\int_{0}^{t}\sum_{j=m}^{\infty}(2j+1)I_{j}(c(s))ds\geq
\]%
\[
2\int_{0}^{t}\sum_{j=m}^{\infty}c_{j}(s)\sum_{k=1}^{\infty}K(k,j)c_{k}%
(s)+\int_{0}^{t}\sum_{j=m}^{\infty}(2j-1)c_{j}(s)\sum_{k=1}^{\infty
}(K(k,j)-K(j,k))c_{k}(s)ds.
\]
where we used the non-negativity of $\int_{0}^{t}(2m-1)c_{j}(s)\sum
_{k=0}^{\infty}K(m,k)c_{k}(s)ds.$ Notice, by symmetry, the second term in the
second line is zero. Then, placing the remaining inequality in equation
(\ref{tail-df2}) we see%
\begin{equation}
\sum_{j=m}^{\infty}(j^{2}-m^{2})c_{j}(t)-\sum_{j=m}^{\infty}(j^{2}-m^{2}%
)c_{j}(0)\geq2\int_{0}^{t}\sum_{j=m}^{\infty}c_{j}(s)\sum_{k=1}^{\infty
}K(k,j)c_{k}(s)ds.\label{mom-ters}%
\end{equation}
Now, by the bounds for $K$ assumed in the theorem, we can write, from
(\ref{mom-ters}), the following%
\begin{align*}
\sum_{j=m}^{\infty}j^{2}c_{j}(t) &  \geq\sum_{j=m}^{\infty}(j^{2}-m^{2}%
)c_{j}(0)+2C\int_{0}^{t}\sum_{j=m}^{\infty}j^{\beta}c_{j}(s)\sum_{k=1}%
^{\infty}c_{k}(s)ds\\
&  \geq\sum_{j=m}^{\infty}(j^{2}-m^{2})c_{j}(0)+2Cm^{\beta-2}\int_{0}^{t}%
\sum_{j=m}^{\infty}j^{2}c_{j}(s)ds.
\end{align*}
where in the second line we used $\sum_{k=1}^{\infty}c_{k}(s)\geq C>0$ as in
Theorem 8. Solving the differential inequality yields the inequality%
\[
\sum_{j=m}^{\infty}j^{2}c_{j}(t)\geq e^{2Cm^{\beta-2}t}\sum_{j=m}^{\infty
}(j^{2}-m^{2})c_{j}(0)
\]
which contradicts, in the limit $m\rightarrow\infty,$ with the boundedness of
$M_{2}(t)$ on finite intervals$.$
\end{proof}

\section{CONCLUSION}

In this article, as an initial mathematical investigation of the subject, we
studied fundamental properties of the EDG systems. For the last two decades,
EDG type models have attracted considerable attention of the interdisciplinary
communities as such models have found applications in physics, migration
dynamics, socioeconomic behavior etc. Also, related particle level processes
(e.g. zero-range processes) are also of significant interest as the rate
equations that we studied in this article can be obtained as limits of
underlying stochastic dynamics. With the development of the subject in
multiple avenues two different but related views of the mass exchange
processes grew around physics and probability fields. Our article is motivated
by the latter approach.

The connection between the two approaches is simple but subtle. For a
physicist the exchange processes are meaningfully defined only between
clusters that have non-zero mass and growth is unidirectional. So, when a
monomer is absorbed into another cluster there remains nothing behind. In the
course of the time the total mass $\sum_{j=1}^{\infty}jc_{j}(t)$ is the only
conserved quantity and total number of clusters $\sum_{j=1}^{\infty}c_{j}(t)$
decreases in time. In the probabilists' view the particles sit on lattice
sites (or on a complete graph) each of which can accommodate arbitrary number
of particles. Masses on the lattice sites interact with each other in a
similar way that clusters interact in the physicists' picture, that is, by
exchanging particles among each other one at a time. There is one significant
difference however, namely the `empty sites' or `empty (available) volume'. In
our formulation, which is the more general one, particles are allowed to hop
from a massive cluster to empty (available) volume creating a single monomer
which can continue to interact with the rest of the system in the usual way.
And when a monomer is taken by another cluster the remaining space is still
available to be occupied. In this regard, the `total volume' or total number
of clusters including the zero-cluster (or the available volume), i.e.,
$\sum_{j=0}^{\infty}c_{j}(t)$ is conserved. These two views are compatible
with each other and in fact one can be "obtained" from the other. By setting
$K(j,0)=0$ in our general formulation, we disallow hopping to the available
volume and system grows indefinitely creating more and more available volume
in time. Indeed, looking at the rate equations (\ref{0-infode})-(\ref{infIC}),
if $K(j,0)=0,$ we observe that $c_{0}(t)$ monotonically increases which means
that $\sum_{j=1}^{\infty}c_{j}(t)$ must decrease due to conservation of total
volume just as a physicist would reason. We also observe that the rate
equations for $j\geq1$ is completely decoupled from the $c_{0}(t)$ and evolve
independently again agreeing with physicists' picture of the process. However,
the main theorems on the existence and uniqueness that are proven in this
article remain intact and give us all the existence and uniqueness results for
the classical EDG system (after choosing a "free" initial condition for
$c_{0}).$

To recapitulate our results, we showed that growth assumptions on the kernel
determine whether the solutions exist globally, locally or do not exist at
all. In particular, for general non-symmetric kernels whose growth is bounded
as $K(j,k)\leq Cjk,$ unique classical solutions exist globally. For symmetric
kernels however, we showed that the existence result can be generalized to
kernels whose growth rate is lying in the range $K(j,k)\leq C(j^{\mu}%
k^{v}+j^{\nu}k^{\mu}),$ with $\mu,\nu\leq2,$ $\mu+v\leq3$. This fact was first
discovered by physicists based on scaling arguments \cite{Naim}. On the other
hand, for non-symmetric kernels which grow fast enough (i.e., $K(j,k)\geq
Cj^{\beta})$ ($\beta>1)$ we showed that the solutions can not exist at all.
Similarly, for symmetric kernels, we proved an analogous results stating that,
for kernels which grow with the rate $K(j,k)\geq Cj^{\beta})$ ($\beta>2)$
solutions cease to exist if some assumptions on the initial conditions are satisfied.

A number of questions remain still open for investigation. First of all, the
intriguing question of existence of gelling solutions (solutions that do not
conserve mass) is not addressed in this article. Physical studies suggest that
$\mu=3/2$ is the critical exponent beyond which gelation takes place. A
separate but related question in this matter is whether the gelling solutions
(if they exist) can be extended beyond the gelation time. Also, physical
studies suggest that for kernels that grow super-quadratically, gelation takes
place instantaneously for general initial conditions. Although, our
non-existence result is a step in that direction, it is by no means a complete
resolution of the problem as we restricted ourselves to specific initial conditions.

Another whole area which deserves detailed analysis and which we have made no
attempt to analyze is the existence of self-similar solutions and large time
behavior of general solutions. In recent years there has been revived interest
on the subject and several seminal results has been obtained for Smoluchowski
type models concerning self-similarity \cite{Laurencot}, \cite{Esco2} and the
long time behaviour. Similar results are likely to be true for the case of the
EDG systems and have been considered by physicists for kernels with special
form \cite{Ke1}. Yet another interesting line of research direction is the
investigation of equilibria, their existence, convergence of general solutions
to the equilibrium and the possibility of dynamics phase transitions and its
relation to the condensation phenomena that appear in zero-range processes.

\textbf{Acknowledgements}. I thank Colm Connaughton and Stefan Grosskinsky for
fruitful discussions. The author is supported by the Marie Curie Fellowship of
European Commission, grant agreement number: 705033.

\end{document}